\newcommand{\CI}{\mathrel{\text{\scalebox{1.07}{$\perp\mkern-10mu\perp$}}}}
\newcommand{\otilde}{\widetilde{\oplus}}
\newcommand{\bigotilde}{\widetilde{\bigoplus}}
\newcommand{\tilmu}{\widetilde{\boldsymbol{\mu}}}
\newcommand{\localmin}[1]{\mathrm{local}\text{ }\mathrm{minima}\left(#1\right)}
\newcommand{\globmin}[1]{\mathrm{global}\text{ }\mathrm{minima}\left(#1\right)}
\newcommand{\algiterate}[1]{{\left(#1\right)}}
\newcommand{\pderiv}[1]{\frac{\partial}{\partial #1}}
\newtheorem{problem}{Problem}
\newtheorem{theorem}{Theorem}
\newtheorem{lemma}{Lemma}
\newtheorem{corollary}{Corollary}
\newtheorem{repeatedtheorem}{Theorem}
\newenvironment{customthm}[1]
  {\innercustomthm}
  {\endinnercustomthm}
\title{Graphical Modelling without Independence Assumptions for Uncentered Data}
\author {
    Bailey Andrew\textsuperscript{\rm 1},
    David R Westhead\textsuperscript{\rm 1},
    Luisa Cutillo\textsuperscript{\rm 1}
}
\begin{document}

\maketitle

\begin{abstract}
The independence assumption is a useful tool to increase the tractability of one's modelling framework.  However, this assumption does not match reality; failing to take dependencies into account can cause models to fail dramatically.  The field of multi-axis graphical modelling (also called multi-way modelling, Kronecker-separable modelling) has seen growth over the past decade, but these models require that the data have zero mean.  In the multi-axis case, inference is typically done in the single sample scenario, making mean inference impossible.

In this paper, we demonstrate how the zero-mean assumption can cause egregious modelling errors, as well as propose a relaxation to the zero-mean assumption that allows the avoidance of such errors.  Specifically, we propose the ``Kronecker-sum-structured mean'' assumption, which leads to models with nonconvex-but-unimodal log-likelihoods that can be solved efficiently with coordinate descent.
\end{abstract}

%

\section{Introduction}

We often wish to find networks (`graphs') that describe our data.  For example, we may be interested in gene regulatory networks in biology, or social interaction networks in epidemiology.  In these cases, the graph itself is an object of interest.  When the goal of an analysis does not involve a graph, the creation of one can still be useful as a preprocessing step to further insights, such as for clustering.  Their use is not limited to clustering; the first step of the popular dimensionality reduction method UMAP, for example, is to create a (typically nearest neighbors) graph \cite{mcinnes_umap_2020}, but one could also experiment with other graphs.

While there are many types of graphs, this paper will focus on \textit{conditional dependency graphs}.  Intuitively, two vertices are connected in such a graph if they are still statistically dependent after conditioning out the other variables.  We denote the property that $x$ and $y$ are conditionally independent, given z, as $x \CI y \mid z$.  We will focus on the case where we condition over all other variables in the dataset $\mathcal{D}$, i.e. where $z = \mathcal{D}_{\backslash x\backslash y} \overset{\textit{def}}{=}\mathcal{D} - \{x, y\}$.

\begin{align*}
    x \CI y \mid z &\overset{\textit{def}}{\iff} \mathbb{P}\left[x, y \mid z\right] = \mathbb{P}\left[x \mid z\right]\mathbb{P}\left[y \mid z\right]
\end{align*}

Conditional dependency graphs have several convenient properties.  They are interpretable, and intuitively capture the `direct effect' of two variables on each other (rather than `indirect effects' passing through confounders and mediators).  They are typically sparse, and are related to causality\footnote{Under modest assumptions, that can be violated in some real-world scenarios, conditional dependency graphs form the `skeleton' of a causal directed acyclic graph, i.e. the causal graph with directionality removed.}.  They also have a very useful property in multivariate Gaussian datasets, namely that two Gaussian variables are conditionally independent if and only if the corresponding term in the inverse covariance matrix is zero.

\begin{align*}
    x \CI y \mid \mathcal{D}_{\backslash x\backslash y} &\overset{\textit{Gaussian}}{\iff} \mathbf{\Sigma}^{-1}_{xy} = 0
\end{align*}

The inverse covariance matrix is also called the precision matrix, and we will denote it $\mathbf{\Psi}$.  It can be interpreted as an adjacency matrix for the graph of conditional dependencies.  Due to this convenient correspondence, and due to the fact that practically all multi-axis work has so far been limited to the Gaussian case, we will assume in this paper that our data is Gaussian.  Through tools such as the Nonparanormal Skeptic \cite{liu_nonparanormal_2012}, this assumption can be weakend to the Gaussian copula assumption (intuitively, the variables have arbitrary continuous marginals but still interact `Gaussian-ly').

When making independence assumptions, this problem is solved by well-known methods such as the Graphical Lasso \cite{friedman_sparse_2008}.  However, independence assumptions are often false in practice.  They become particularly egregious in the case of omics data, such as single cell RNA-sequencing (scRNA-seq).  These datasets take the form of a cells-by-genes matrix.  We might be interested in a gene regulatory network, in which case we assume the cells are independent.  Alternatively, we might be interested in the cellular microenvironment, in which case we assume the genes are independent.  Whichever one we want to learn, we are put in the awkward position of assuming the other does not exist.

To avoid independence assumptions altogether, we can vectorize our dataset.  Instead of $n$ samples of $m$ features, we have $1$ sample of $nm$ features.  For scRNA-seq, the precision matrix is of size $O(n^2m^2)$ and represents the dependencies between different (cell, gene) pairs.  The problem becomes both computationally and statistically intractable; we quickly run out of space to store such matrices, and have no way of being confident in our results to any degree of statistical certainty.  We need to find a middle ground between independence and full dependence.

To do this, we can take advantage of matrix structure.  We are not truly interested in a graph of (cell, gene) pairs, but rather a graph of cells and a graph of genes.  Thus, we can make a `graph decomposition assumption'; our (cell, gene) pair graph should be able to be factored (for some definition of factoring) into the cell graph and the gene graph.

Several choices of decomposition have appeared in the literature, such as the Kronecker product \cite{dutilleul_mle_1999} and squared Kronecker sum decompositions \cite{wang_sylvester_2020}; in this paper, we focus on the Kronecker sum decomposition \cite{kalaitzis_bigraphical_2013} due to its popularity, convexity, relationship to conditional dependence, interpretability as a Cartesian product, and correspondence to a maximum entropy distribution.  The Kronecker sum is denoted $\oplus$ and is defined as $\mathbf{A}_{a\times a} \oplus \mathbf{B}_{b\times b} = \mathbf{A}_{a\times a} \otimes \mathbf{I}_{b\times b} + \mathbf{I}_{a\times a} \otimes \mathbf{B}_{b\times b}$, where $\otimes$ is the Kronecker product.  The operation is associative, allowing a straightforward generalization to more than two axes (i.e. tensor-variate datasets); this was first explored by \citeauthor{greenewald_tensor_2019} (\citeyear{greenewald_tensor_2019}).  Under the zero-mean and Kronecker sum decomposability assumptions, the model for a dataset $\mathcal{D}$ is as follows:

\begin{align*}
    \mathcal{D} &\sim \mathcal{N}_{KS}\left(\mathbf{0}, \{\mathbf{\Psi}_\ell\}_\ell\right) \\
    \iff \mathrm{vec}\left[\mathcal{D}\right] &\sim \mathcal{N}\left(\mathbf{0},\left(\bigoplus_\ell \mathbf{\Psi}_\ell\right)^{-1}\right)
\end{align*}

The zero-mean assumption is conspicuous, but necessary; recall that we transformed our dataset from $n$ samples and $m$ features to a single sample of $nm$ features.  Estimating the mean of a single sample is a recipe for disaster; when one centers the dataset, they will be left with a constant zero vector.

We could consider decomposing the mean of our model, just like for our precision matrices.  A natural and analogous decomposition would be that our mean takes the form $\boldsymbol{\mu}_\mathrm{cells} \otilde \boldsymbol{\mu}_\mathrm{genes} = \boldsymbol{\mu}_\mathrm{cells} \otimes \mathbf{1} + \mathbf{1} \otimes \boldsymbol{\mu}_\mathrm{genes}$; this operation intermixes the means in the same way the Kronecker sum intermixes the variances.  For Kronecker \textit{product} models (not Kronecker sum models), this has been considered before \cite{allen_transposable_2010}.  It was shown that the maximum likelihood for $\mu_\ell$ has the following form (in the two-axis case, where $d_\ell$ represents the number of elements in axis $\ell$):

\begin{align*}
    \boldsymbol{\mu}_1 &= \frac{\mathbf{1}^T\left(\mathbf{X} - \mathbf{1}\boldsymbol{\mu}_2^T\right)}{d_2} & \boldsymbol{\mu}_2 &= \frac{\mathbf{1}^T\left(\mathbf{X} - \mathbf{1}\boldsymbol{\mu}_1^T\right)}{d_1}
\end{align*}

We will see later that such formulas are not as simple in the Kronecker sum case.  First, though, let us note that the estimate of $\boldsymbol{\mu}_\ell$ does not depend on the dependencies in the data ($\mathbf{\Psi}_\ell$).  This is a mathematically convenient property to have, but is philosophically troubling.  Suppose we have a dataset of only two points, both distinct and independent.  The true mean value is just their average.  Now, suppose we duplicate the first point several times, perhaps with some modest noise added - the mean estimate will lie much closer to the first point.

In fact, we can choose how many times to duplicate the first or second points in order to place the estimated mean anywhere between the original points.  However, this estimated mean is an artifact of the dependencies in our data!  The true mean, ultimately, still lies at its original location.

As an example, suppose we were studying a (non-Hodgkin's) lymphoma dataset that was representative of the general population.  Only 2\% of lymphoma cases are lymphoblastic lymphoma\footnote{According to the Leukemia Foundation 
charity.}, but specialized CAR-T cell treatments have been approved for it in the UK, such as Brexucabtagene Autoleucel 
(NICE, 2013).  
In models that don't take into account dependencies, this signal could be drowned out by the other, more common, subtypes; the `average lymphoma case' would look much more like the average case of its largest subtype, rather than being representative of all subtypes.  The fact that rare subtypes may benefit from specialized treatments simply gets drowned out by the properties of the others.

\subsection{Our Contributions}

\begin{figure}[h!]
    \centering
    \includegraphics[width=0.7\linewidth]{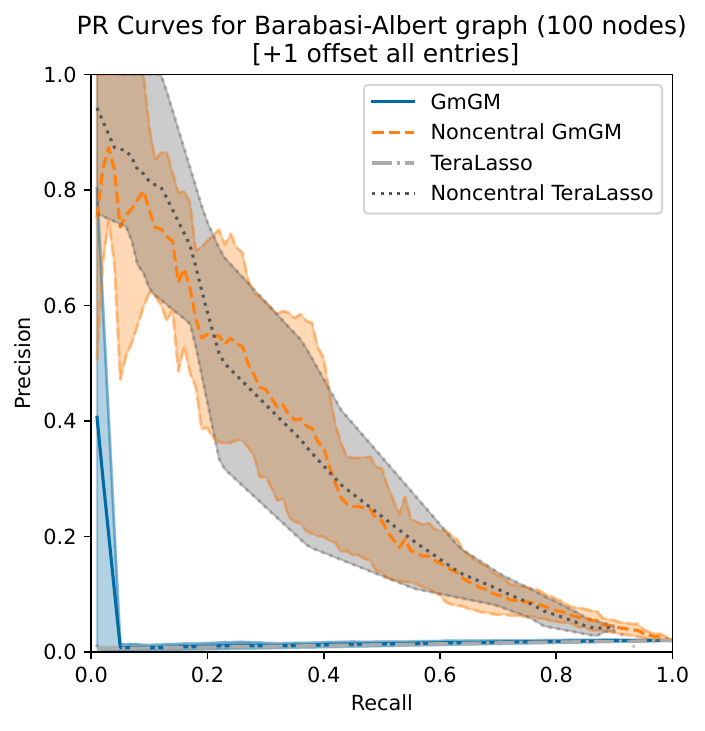}
    \caption{Precision and recall for on a synthetic dataset in which data was generated from a Kronecker-sum normal distribution, and then had a constant value of 1 added to every entry of the matrix.
    Error bars are the best/worst performance over 10 trials; the center line is average performance.}
    \label{fig:ba-total-offset}
\end{figure}

We are naturally led to the following question:

\begin{problem}
    \label{prob:estimability}
    For the noncentral Kronecker sum model, does using the mean decomposition $\bigotilde_\ell \boldsymbol{\mu}_\ell$ allow its parameters to be effectively estimated?
\end{problem}

In this paper, we show that the answer to this question is yes.  For the Kronecker-sum-structured model, the estimators for $\boldsymbol{\mu}_\ell$ depend on $\mathbf{\Psi}_\ell$.  However, due to this dependency, our problem is no longer convex; there could be some non-global minima!

\begin{problem}
    \label{prob:mle}
    Is our estimator for the parameters of the noncentral Kronecker-sum-structured normal distribution guaranteed to be the \textit{global} maximum likelihood estimator?
\end{problem}

We also show that the answer to this question is also affirmative; in fact, there are no non-global minima, despite the nonconvexity.

One may also wonder if the zero-mean assumption is truly problematic.  In fact, the zero-mean assumption can have an extremely dramatic negative effect on performance.  In the extremely simple scenario in which we add a constant offset to every element, we can see that methods that do not try to estimate the mean are practically worthless, whereas those that do (which we shall call `noncentral') maintain reasonable performance (Figure \ref{fig:ba-total-offset}).

\subsection{Notation}

In this paper, lowercase letters $a$ will refer to scalars, lowercase bold $\mathbf{v}$ will refer to vectors, uppercase bold $\mathbf{M}$ will refer to matrices, and uppercase calligraphic $\mathcal{T}$ will refer to tensors.  We will conceptualize our dataset to be a tensor $\mathcal{D}$ with axes lengths $d_\ell$ for each axis $\ell$.  $d_{<\ell}$ and $d_{>\ell}$ are shorthands for the product of lengths of axes before and after $\ell$, respectively.  $d_{\backslash\ell}$ is the product of all axes other than $\ell$; from the perspective of $\ell$, this is the number of samples for that axis.  $d_\forall$ is the product of all $d_\ell$.  We typically follow the notation of \citeauthor{kolda_tensor_2009} (\citeyear{kolda_tensor_2009}) for tensor operations.

Let $\mathbf{I}_a$ and $\mathbf{1}_a$ be the $a\times a$ identity matrix and length $a$ vector of ones, respectively.  We formally define the Kronecker sum of matrices and analogous operation on vectors as follows:

\begin{align*}
    \bigoplus_\ell \mathbf{\Psi}_\ell &= \sum_\ell \mathbf{I}_{d_{<\ell}} \otimes \mathbf{\Psi}_\ell \otimes \mathbf{I}_{d_{>\ell}} \\
    \bigotilde_\ell \boldsymbol{\mu}_\ell &= \sum_\ell \mathbf{1}_{d_{<\ell}} \otimes \boldsymbol{\mu}_\ell \otimes \mathbf{1}_{d_{>\ell}}
\end{align*}

We will let $\mathbf{\Psi}_{\backslash\ell}$ be a shorthand for $\bigoplus_{\ell'\neq\ell}\mathbf{\Psi}_{\ell'}$, with an analogous definition for $\boldsymbol{\mu}_{\backslash\ell}$.  Kronecker products, and by extension Kronecker sums, have a convenient permutation-invariance property; we can swap the order of summands without affecting the results, as long as we perform this swap consistently across an equation and permute matrices accordingly; we can typically rewrite equations involving $\bigoplus_\ell \mathbf{\Psi}_\ell$ into those involving $\mathbf{\Psi}_\ell \oplus \mathbf{\Psi}_{\backslash\ell}$.  We will define $\mathbf{\Omega} = \bigoplus_\ell \mathbf{\Psi}_\ell$ and $\boldsymbol{\omega} = \bigotilde_\ell \boldsymbol{\mu}_\ell$.  We define noncentral Kronecker sum normal distribution as:

\begin{align*}
    \mathrm{vec}\left[\mathcal{D}\right] &\sim \mathcal{N}\left(\boldsymbol{\omega}, \boldsymbol{\Omega}^{-1}\right)
\end{align*}

We let $\mathbb{K}_\mathbf{M}$ be the space of Kronecker-sum-decomposable matrices\footnote{Technically, we need to specify the dimensionality of each term in the decomposition to define this space, but this would be notationally cumbersome and is always clear from context.}, i.e. $\mathbf{\Omega} \in \mathbb{K}_\mathbf{M}$.  Likewise, $\boldsymbol{\omega} \in \mathbb{K}_\mathbf{v}$ is the space of Kronecker-sum-decomposable vectors.  Finally, let $\mathbb{K}_{\mathbf{M}\mathbf{v}}$ be the space of vectors of the form $\mathbf{X}\mathbf{y}$, where $\mathbf{X} \in \mathbb{K}_\mathbf{M}, \mathbf{y} \in \mathbb{K}_\mathbf{v}$.  All of these spaces are linear subspaces of $\mathbb{R}^{d_\forall}$ or $\mathbb{R}^{d_\forall\times d_\forall}$.  We may use $\mathbb{K}$ as a shorthand when discussing properties that apply to both $\mathbb{K}_\mathbf{M}$ and $\mathbb{K}_\mathbf{v}$.

It is important to note that the parameterization of $\mathbb{K}$ we have used so far is not identifiable.  If $\sum_\ell c_\ell = 0$, then $\bigoplus_\ell \left(\mathbf{\Psi}_\ell + c_\ell\mathbf{I}\right) = \bigoplus_\ell \mathbf{\Psi}_\ell$.  \citeauthor{greenewald_tensor_2019} (\citeyear{greenewald_tensor_2019}) use the identifiable representation $\tau\mathbf{I}_\forall + \bigoplus_\ell \widetilde{\mathbf{\Psi}}_\ell$, where $\mathrm{tr}\left[\widetilde{\mathbf{\Psi}}_\ell\right] = 0$.  Likewise, we can identifiably represent our mean decomposition as $m\mathbf{1} + \bigotilde_\ell \widetilde{\boldsymbol{\mu}}_\ell$, where $\mathbf{1}^T\widetilde{\boldsymbol{\mu}}_\ell = 0$.

The last important property we will introduce is the matrix representation of Kronecker products of vectors: note that $\boldsymbol{\mu}_\ell \otimes \mathbf{1}_{d_{\backslash\ell}} = \left(\mathbf{I}_{d_\ell} \otimes \mathbf{1}_{d_{\backslash_\ell}}\right)\boldsymbol{\mu_\ell}$.

\section{Estimation Method}

As a shorthand, let $\mathbf{x} = \mathrm{vec}\left[\mathcal{D}\right]$.  Our negative log likelihood looks just like that of our normal distribution, except that the parameters are restricted to lie in $\mathbb{K}$.  To ensure existence of $\mathbf{\Omega}$, it is necessary to either add a regularization penalty, as in TeraLasso, or restrict $\mathbf{\Omega}$ to a low-rank subspace, as proposed by \citeauthor{andrew_making_2024} (\citeyear{andrew_making_2024}).  These are details of the estimator for $\mathbf{\Omega}$, which do not affect the estimator of $\boldsymbol{\omega}$; we do not focus on these aspects of the problem.

\begin{align*}
    \mathrm{p}\left(\mathcal{D}\right) &= \frac{\sqrt{\left|\mathbf{\Omega}\right|}}{\left(2\pi\right)^{\frac{d_\forall}{2}}}e^{\frac{-1}{2}\left(\boldsymbol{x} - \boldsymbol{\omega}\right)^T\mathbf{\Omega}\left(\boldsymbol{x} - \boldsymbol{\omega}\right)} \\
    \mathrm{NLL}\left(\mathcal{D}\right) &\propto \frac{-1}{2}\log\left|\mathbf{\Omega}\right| + \frac{1}{2}\left(\boldsymbol{x} - \boldsymbol{\omega}\right)^T\mathbf{\Omega}\left(\boldsymbol{x} - \boldsymbol{\omega}\right)
\end{align*}

Despite its ubiquity, the normal distribution is actually fairly poorly behaved from an optimization perspective; its NLL is not convex, nor is it even geodesically convex \cite{hosseini_matrix_2015}.  Thankfully, the MLE of the mean (for the unrestricted normal) has a closed form solution, $\mathbf{\omega} = \frac{1}{n}\sum_i^n \mathbf{x}_i$, so this nonconvexity does not affect results in practice.  Unfortunately, due to our $\mathbb{K}_\mathbf{v}$ restriction, this solution no longer holds.

\begin{figure}
    \centering
    \includegraphics[width=0.8\linewidth]{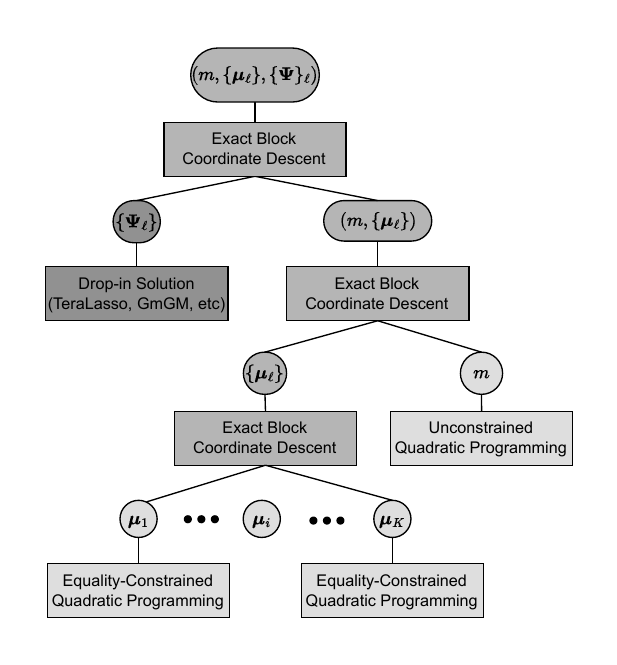}
    \caption{A graphical summary of the manner in which we divide our problem into sub-problems.  While the main problem is nonconvex, all sub-problems are convex, and by Theorem \ref{thm:unimodality} the main problem has a unique solution.}
    \label{fig:estimation-walkthrough}
\end{figure}

While it is not convex in $(\mathbf{\Omega}, \boldsymbol{\omega})$ jointly, it is convex in each argument individually - i.e, it is biconvex.  This suggests a flip-flop optimization scheme.  For fixed $\boldsymbol{\omega}$, the estimation of $\mathbf{\Omega}$ reduces to the noncentral case; several algorithms already exist to solve this.  Thus, we focus on optimization when $\mathbf{\Omega}$ is fixed.

Optimization w.r.t. $\boldsymbol{\omega}$ can be framed as a convex constrained quadratic programming (QP) problem.  However, it will turn out to be convenient to further break the problem up into its constituent parts $\left(m, \left\{\tilmu_\ell\right\}\right)$, and minimize them each individually.  Let $\theta_\ell = \mathbf{1}_{d_{\ell'}}^T\mathbf{\Psi}_{\ell'}\mathbf{1}_{d_{\ell'}}$, $\theta_{\backslash\ell} = \sum_{\ell'\neq\ell}\frac{d_\forall}{d_\ell d_{\ell'}} \theta_\ell$, and $\mathbf{x}_\ell$ be the vectorization of $\mathcal{D}$ done in the order as if axis $\ell$ were the first axis.  To preserve space, we will defer the algebraic manipulations to the appendix, and simply note here that the optimal value of $m$, with all other values fixed, is:

\begin{align}
    m &= \frac{\left(\mathbf{x} - \bigotilde_\ell \tilmu_\ell\right)\left(\bigotilde\mathbf{\Psi}_\ell\mathbf{1}_{d_\ell}\right)}{\sum_\ell d_{\backslash\ell}\theta_\ell} \label{eq:m}
\end{align}

Observe that, when the elements of $\mathbf{x}$ are independent, this formula expresses the average distance of the dataset from 0 (after accounting for the axis-wise means $\tilmu_\ell$).

For $\tilmu_\ell$ (with $\tilmu_{\backslash\ell}$ fixed, and again deferring the derivation to the appendix) we have the following QP problem:

\begin{align*}
    \underset{\tilmu_\ell}{\mathrm{argmin}} &\hspace{5pt} \tilmu_\ell^T \mathbf{A}_\ell \tilmu_\ell + \mathbf{b}_\ell^T\tilmu_\ell \\
    \mathrm{where} &\hspace{3pt} \mathbf{A}_\ell = d_{\backslash\ell}\mathbf{\Psi}_\ell + \theta_{\backslash\ell}\mathbf{I}_{d_\ell} \\
    &\hspace{3pt}\boldsymbol{b}_\ell = md_{\backslash\ell}\mathbf{1}_{d_\ell}^T\mathbf{\Psi}_\ell + m\theta_{\backslash\ell}\mathbf{1}_{d_\ell} + \left[\tilmu_{\backslash\ell}^T\mathbf{\Psi}_{\backslash\ell}\mathbf{1}_{d_{\backslash\ell}}\right]\mathbf{1}_{d_\ell} \\
    &\hspace{13pt}-\mathbf{x}_\ell^T\left(\mathbf{\Psi}_\ell \otimes \mathbf{1}_{d_{\backslash\ell}}\right) - \mathbf{x}_\ell^T\left(\mathbf{1}_\ell\otimes\mathbf{\Psi}_{\backslash\ell}\mathbf{1}_{d_{\backslash\ell}}\right) \\
    &\hspace{3pt} \tilmu_\ell^T\mathbf{1}_{d_\ell} = \mathbf{0}
\end{align*}

Parts of the definition of $\mathbf{b}_\ell$ can be simplified further, leading to more efficient computations, but such manipulations are notationally complicated to express; we defer them to the appendix, where we also show that $\mathbf{A}_\ell$ is guaranteed to be invertible.  QP problems with linear equality constraints have closed-form solutions.  In particular, we have that:

\begin{align}
    \tilmu_\ell = \frac{\boldsymbol{1}_{d_\ell}^T\boldsymbol{A}_\ell^{-1}\boldsymbol{b}_\ell}{\boldsymbol{1}_{d_\ell}^T\mathbf{A}_\ell^{-1}\boldsymbol{1}_{d_\ell}}\boldsymbol{A}_\ell^{-1}\mathbf{1}_{d_\ell} - \mathbf{A}_\ell^{-1}\boldsymbol{b}_\ell \label{eq:tilmu}
\end{align}

As we have closed-form solutions to all our coordinatewise minima, block coordinate descent is a natural solution.  While typically it is advisable to avoid directly calculating inverses $\mathbf{A}_\ell^{-1}$ in the solution to QP problems, inversion here is likely to be cheap; most solvers for $\mathbf{\Psi}_\ell$ require calculating eigendecompositons of $\mathbf{\Psi}_\ell$.  Thus, the eigendecomposition, and hence inverse, of $\mathbf{A}_\ell$ is readily available.  This answers Problem \ref{prob:estimability}.

Some methods, such as GmGM \cite{andrew_gmgm_2024}, require only a single eigendecomposition, and then find the optimum while staying in `eigen-space'.  They use the eigenvectors of covariance matrices; when the mean is updated, the new covariance matrices can be expressed in terms of rank-one-updates of the original covariance matrix.

Rank-one updates of eigendecompositions are cheap (comparatively); thus, for each flip-flop of our algorithm, if GmGM is used as the solver for $\mathbf{\Psi}_\ell$, then it can stay entirely in eigenspace - we preserve the `only-one-eigendecomposition' property of GmGM.  Other methods, such as TeraLasso, will still require multiple eigendecompositions, but these will remain useful for our calculation of $\mathbf{A}^{-1}_\ell$.  It is because of these convenient computational properties that we have chosen a flip-flop route for estimation of $\tilmu_\ell$, despite the fact that the optimization problem is jointly convex in $\{\tilmu_\ell\}_\ell$.

\begin{algorithm}
    \begin{algorithmic}
        \REQUIRE Dataset $\mathcal{D}$ of size $d_1, ..., d_K$
        \REQUIRE Sub-procedure $\bigoplus\mathrm{GRAPH}$ that estimates $\mathbf{\Psi}_\ell$
        \STATE \hspace{10pt} such as TeraLasso, GmGM\par
        \hspace{-10pt}\textbf{Output: } Identifiable MLE $\left(m, \tilmu_\ell, \tau, \widetilde{\mathbf{\Psi}}_\ell\right)$
        \STATE \textbf{\#Initialize Parameters with Reasonable Guesses}
        \STATE $m^\algiterate{0} \gets \frac{1}{d_\forall}\sum_{i=1}^{d_\forall} \mathrm{vec}\left[\mathcal{D}\right]_i$
        \FOR{$\ell \in \{1, ..., K\}$}
        \STATE Let $\mathbf{x}_\ell$ be vectorized $\mathcal{D}$ with $\ell$ as first axis.
        \STATE Initial guess is the mean along axis $\ell$:
        \STATE \hspace{10pt}$\tilmu^\algiterate{0}_\ell \gets \frac{1}{d_{\backslash\ell}} \sum_{i=1}^{d_{\backslash\ell}}\mathbf{x}_{\ell, i} - m^\algiterate{0}$
        \ENDFOR
        \STATE \textbf{\#Iterate Until Convergence}
        \WHILE{not converged}
        \STATE $\left\{\mathbf{\Psi}_\ell^\algiterate{i+1}\right\}_\ell \gets \bigoplus\mathrm{GRAPH}\left(\mathcal{D} - m^\algiterate{i}\mathbf{1} - \bigotilde_{\ell'}\boldsymbol{\mu}_\ell^\algiterate{i}\right)$
        \WHILE{not converged}
        \STATE \textbf{\#Find Mean Parameters}
        \STATE $m^\algiterate{i+1} \gets$ (Equation \ref{eq:m})
        \FOR{$\ell\in\left\{1,...,K\right\}$}
        \STATE $\tilmu_\ell^\algiterate{i + 1} \gets$ (Equation \ref{eq:tilmu})
        \ENDFOR
        \ENDWHILE
        \ENDWHILE
        \STATE Map $\left\{\mathbf{\Psi}_\ell\right\}$ to identifiable parameterization $\left(\left\{\widetilde{\mathbf{\Psi}}_\ell\right\}, \tau\right)$
    \end{algorithmic}
    \caption{\textbf{Noncentral KS-structured Gaussian MLE}}
\label{alg:mean-estim}
\end{algorithm}

See Algorithm \ref{alg:mean-estim} for a pseudocode presentation of our algorithm.  Note that the pseudocode does not take into account potential speedups that come from sharing information between the precision matrix and mean estimation tasks, such as sharing eigenvectors.  We wanted our method to be able to be used as a `drop-in wrapper' for pre-existing methods.  Accordingly, we have presented and implemented our methodology in its most general form.  For a graphical overview of how we optimize, see Figure \ref{fig:estimation-walkthrough}.

\section{Global Optimality}

The optimization procedure we consider is not convex, merely biconvex.  How do we know if the solution we find is globally optimal?  In this section, we will show that, despite the non-convexity, the problem still has the property that all local minima are global minima.  In fact, when identifiable representations are used, we will see that there is exactly one minimum.  As in the last section, we will defer the more mechanical details of the proof the the appendix, and rather give a sketch of the main ideas here.

\begin{theorem}
    \label{thm:unimodality}
    The maximum likelihood of the noncentral Kronecker-sum-structured normal distribution has a unique maximum, which is global.  The estimator defined by Algorithm \ref{alg:mean-estim} converges to this.
\end{theorem}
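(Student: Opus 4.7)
The plan is to prove both claims by reducing to the natural parameters of the Gaussian, where the log-partition becomes jointly convex. The objective
\[
    f(\mathbf{\Omega}, \boldsymbol{\omega}) = -\log|\mathbf{\Omega}| + (\mathbf{x}-\boldsymbol{\omega})^T\mathbf{\Omega}(\mathbf{x}-\boldsymbol{\omega})
\]
is a positive-definite quadratic in $\boldsymbol{\omega}\in\mathbb{K}_\mathbf{v}$ for every fixed $\mathbf{\Omega}\succ 0$, and the classical TeraLasso argument gives convexity in $\mathbf{\Omega}\in\mathbb{K}_\mathbf{M}\cap\{\mathbf{\Omega}\succ 0\}$ for every fixed $\boldsymbol{\omega}$ --- strictly so once the regularizer or rank restriction from the previous section is imposed. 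Each block subproblem therefore has a unique minimizer, and for the $(m,\tilmu_\ell)$ block this is exactly what Equations~\eqref{eq:m} and~\eqref{eq:tilmu} compute; the objective is biconvex but not jointly convex, so biconvexity alone will not suffice for uniqueness and extra structure must be exploited.

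For uniqueness of the global minimum, I would pass to natural parameters $(\mathbf{\Omega},\boldsymbol{\eta})$ with $\boldsymbol{\eta}=\mathbf{\Omega}\boldsymbol{\omega}$, in which
\[
    f(\mathbf{\Omega},\boldsymbol{\eta}) = -\log|\mathbf{\Omega}| + \mathbf{x}^T\mathbf{\Omega}\mathbf{x} - 2\mathbf{x}^T\boldsymbol{\eta} + \boldsymbol{\eta}^T\mathbf{\Omega}^{-1}\boldsymbol{\eta}
\]
is jointly strictly convex over $\{\mathbf{\Omega}\succ 0\}\times\mathbb{R}^{d_\forall}$, since it is the log-partition function of a regular exponential family. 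The original constraint $\boldsymbol{\omega}\in\mathbb{K}_\mathbf{v}$ becomes $\boldsymbol{\eta}\in\mathbf{\Omega}\mathbb{K}_\mathbf{v}\subseteq\mathbb{K}_{\mathbf{M}\mathbf{v}}$, and relaxing to the a priori larger constraint $\boldsymbol{\eta}\in\mathbb{K}_{\mathbf{M}\mathbf{v}}$ yields a convex program with a unique minimizer $(\mathbf{\Omega}^*,\boldsymbol{\eta}^*)$, using that $\mathbb{K}_{\mathbf{M}\mathbf{v}}$ is a linear subspace (Notation section). To close the argument I would verify tightness: setting $\boldsymbol{\omega}^*:=(\mathbf{\Omega}^*)^{-1}\boldsymbol{\eta}^*$, the KKT condition $\mathbf{x}-\boldsymbol{\omega}^*\perp\mathbb{K}_{\mathbf{M}\mathbf{v}}$, specialised to $\mathbf{I}\cdot\mathbb{K}_\mathbf{v}\subseteq\mathbb{K}_{\mathbf{M}\mathbf{v}}$ and $\mathbf{\Omega}^*\cdot\mathbb{K}_\mathbf{v}\subseteq\mathbb{K}_{\mathbf{M}\mathbf{v}}$, forces both $\mathbf{x}-\boldsymbol{\omega}^*\perp\mathbb{K}_\mathbf{v}$ and $\mathbf{\Omega}^*(\mathbf{x}-\boldsymbol{\omega}^*)\perp\mathbb{K}_\mathbf{v}$; combined with the KKT condition on $\mathbf{\Omega}$ and a pass to the identifiable parameterization $\boldsymbol{\omega}=m\mathbf{1}+\bigotilde_\ell\tilmu_\ell$ to remove the gauge redundancy, these conditions pin $\boldsymbol{\omega}^*\in\mathbb{K}_\mathbf{v}$, so the relaxation is tight and its unique minimizer is the unique global minimum of the original problem.

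Convergence of Algorithm~\ref{alg:mean-estim} is then routine: it is block coordinate descent on a continuous biconvex objective with uniquely-defined block minimizers and compact sublevel sets (ensured by the regularizer or rank restriction on $\mathbf{\Omega}$), so a Tseng-type theorem gives that every limit point is a stationary point of $f$, and the uniqueness just established forces the whole sequence to converge to the global minimum. I expect the hardest step to be the tightness argument: even granted that $\mathbb{K}_{\mathbf{M}\mathbf{v}}$ is a linear subspace, deducing $\boldsymbol{\omega}^*\in\mathbb{K}_\mathbf{v}$ from KKT perpendicularity relies on fully exploiting how $\mathbb{K}_\mathbf{M}$ acts on $\mathbb{K}_\mathbf{v}$, and is most naturally carried out in the identifiable parameterization where the gauge freedom has been removed.
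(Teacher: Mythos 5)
Your reparameterization to natural parameters $(\mathbf{\Omega},\boldsymbol{\eta})$ with $\boldsymbol{\eta}=\mathbf{\Omega}\boldsymbol{\omega}$ is, at heart, the same device the paper uses: the paper's bordered matrix $\mathbf{\Gamma}$ stores exactly $\mathbf{\Omega}$ and $-\boldsymbol{\eta}$ in its blocks, and its extra slack variable $s$ together with the block-determinant identity is just a way of realising the jointly convex term $\boldsymbol{\eta}^T\mathbf{\Omega}^{-1}\boldsymbol{\eta}$ as a Schur complement. Your version is arguably cleaner for the regularized, full-rank case --- it dispenses with $s$, the ``local minima occur at $s=1$'' computation, and the determinant manipulations --- and your convergence argument (unique block minimizers, Tseng-type theorem) matches the paper's. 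You do not address the low-rank restriction, for which the paper needs the pseudodeterminant machinery of Hartwig and Holbrook; that part of the theorem's scope is simply missing from your proposal.

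The genuine gap is the tightness step, and you have correctly identified it as the crux without actually closing it. Replacing the $\mathbf{\Omega}$-dependent constraint $\boldsymbol{\eta}\in\mathbf{\Omega}\mathbb{K}_{\mathbf{v}}$ by the fixed subspace $\mathbb{K}_{\mathbf{M}\mathbf{v}}$ is a strict relaxation: $\mathbb{K}_{\mathbf{M}\mathbf{v}}$ contains cross terms of the form $\mathbf{1}_{d_{<\ell}}\otimes\mathbf{\Psi}_{\ell}\mathbf{1}_{d_\ell}\otimes\cdots\otimes\boldsymbol{\mu}_{\ell'}\otimes\cdots$ built from a $\mathbf{\Psi}_\ell$ that need not equal the block of the optimal $\mathbf{\Omega}^*$, so $(\mathbf{\Omega}^*)^{-1}\boldsymbol{\eta}^*$ has no a priori reason to lie in $\mathbb{K}_{\mathbf{v}}$. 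The KKT consequences you extract, $\mathbf{x}-\boldsymbol{\omega}^*\perp\mathbb{K}_{\mathbf{v}}$ and $\mathbf{\Omega}^*(\mathbf{x}-\boldsymbol{\omega}^*)\perp\mathbb{K}_{\mathbf{v}}$, are conditions on the residual $\mathbf{x}-\boldsymbol{\omega}^*$, and it is not shown (and not obvious) how they force membership of $\boldsymbol{\omega}^*$ itself in $\mathbb{K}_{\mathbf{v}}$; as written, what you have proved is uniqueness of the minimizer of the \emph{relaxed} program, which bounds but does not pin down the original MLE. To be fair, the paper's own problem $\mathfrak{B}$ imposes only $\mathbf{\Gamma}_{1:d_\forall,d_\forall+1}\in\mathbb{K}_{\mathbf{M}\mathbf{v}}$ and asserts a one-to-one correspondence of feasible sets without checking surjectivity, so it elides the very same point; but since your proof explicitly rests on ``the relaxation is tight,'' that claim needs an actual argument before the proposal can be accepted.
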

\begin{proof}
    See appendix.  Sketch given below.
\end{proof}

The general idea is to perform a one-to-one transformation on the problem to turn it into a strictly convex problem in a different set of parameters - specifically, an instance of conic programming with linear constraints.  Letting $\mathfrak{A}$ be our original problem, $\mathfrak{A}'$ be a slight modification of it, and $\mathfrak{B}$ be the conic problem, we show the following chain of implications:

\begin{align*}
    &\left(\boldsymbol{\omega}, \mathbf{\Omega}\right) \in \localmin{\mathfrak{A}} \\
    \iff& \left(\boldsymbol{\omega}, \mathbf{\Omega}, s=1\right) \in \localmin{\mathfrak{A}'} \\
    \iff& \left(\mathbf{\Gamma}\right) \in \localmin{\mathfrak{B}} \\
    \iff& \left(\mathbf{\Gamma}\right) \in \globmin{\mathfrak{B}} \tag{convexity}
\end{align*}


$\mathbf{\Gamma}$ is a one-to-one function of $\boldsymbol{\omega}, \mathbf{\Omega}$, and $s$, where $s$ is somewhat like a slack variable.  We show that local minima always occur at $s = 1$, and furthermore that at local minima, the value of the objective function of $\mathfrak{A}$ equals the value of the objective function of $\mathfrak{B}$ for the corresponding parameters.  Thus, all local minima of $\mathfrak{A}$ must obtain the same value, i.e. they are global minima.  In fact, because the transformation is one-to-one, and $\mathfrak{B}$ has a unique global minimum,  there is exactly one local minimum of $\mathfrak{A}$.

The transformation we use is:

\begin{align*}
    \boldsymbol{x} \sim \mathcal{N}\left(\boldsymbol{\omega}, \mathbf{\Omega}^{-1}\right) &\rightarrow \begin{bmatrix}\boldsymbol{x} \\ 1\end{bmatrix} \sim \mathcal{N}\left(\mathbf{0}, \mathbf{\Gamma}^{-1}\right) \\
    \mathbf{\Gamma} &= \begin{bmatrix}
        \mathbf{\Omega} & -\mathbf{\Omega}\boldsymbol{\omega} \\
        -\boldsymbol{\omega}^T\mathbf{\Omega} & \frac{1}{s} + \boldsymbol{\omega}^T\mathbf{\Omega}\boldsymbol{\omega}
    \end{bmatrix} \\
    s &> 0
\end{align*}

We first came across a similar transformation in a paper by \citeauthor{hosseini_matrix_2015} (\citeyear{hosseini_matrix_2015}), although their transformation was for the covariance matrix; this transform here can be derived from theirs using block matrix inversion.  The variable $\frac{1}{s}$ is used to allow $\mathbf{\Gamma}$ to be any positive definite matrix (subject to constraints on $\mathbf{\Omega}, \boldsymbol{\omega}$).  Otherwise, the value of the bottom right entry would be determined by the other entries.  It should be clear that, as long as $\mathbf{\Omega}$ is positive definite, this transformation is one-to-one given a fixed $s$.  If it is merely positive semidefinite, we cannot guarantee the global minimum is unique (but all local minima remain global); typically, solvers for Kronecker-sum-structured $\mathbf{\Omega}$ require it to be positive definite.

The strategy used to prove the chain of implications is to show that, when $s=1$, $\mathfrak{A}$ and $\mathfrak{A}'$ correspond to the same optimization problem.  Since $\left(\mathbf{\Omega}, \boldsymbol{\omega}, s\right) \rightarrow \mathbf{\Gamma}$ is a one-to-one mapping, as well as the fact that this mapping preserves the objective function's values, and finally the fact that $\mathbb{K}_{\mathbf{M}\mathbf{v}}$ is a linear subspace and hence preserves strict convexity of $\mathfrak{B}$, we have that there is exactly one minimum to $\mathfrak{B}$ which by the chain of implications is also the unique minimum of $\mathfrak{A}$.

All such claims are proven in the appendix.  In particular, uniqueness of the global minimum holds when one uses convex regularization penalties on $\mathbf{\Omega}$, and furthermore holds (subject to a reasonable constrant on $\boldsymbol{\omega}$) when one restricts $\mathbf{\Omega}$ to be a low-rank matrix, such as that considered by a variant of GmGM \cite{andrew_making_2024}.  The proof in the latter case is considerably more complicated, requiring mathematical machinery particularly from work by \citeauthor{hartwig_singular_1976} (\citeyear{hartwig_singular_1976}) and \citeauthor{holbrook_differentiating_2018} (\citeyear{holbrook_differentiating_2018}); it holds when we restrict $\boldsymbol{\omega}$ to be orthogonal to the nullspace of $\boldsymbol{\Omega}$ (a reasonable assumption since data generated from the distribution should also be orthogonal to the nullspace).

We claimed that our block coordinate descent algorithm converges to this unique optimal value.  Block coordinate descent methods are not guaranteed to converge for smooth nonconvex problems.  However, when there is a unique minimum for each block, as is the case here, block coordinate descent is known to converge \cite{tseng_convergence_2001, luenberger_linear_2008}.

\section{Results}
\label{sec:results}

\begin{table}[h!]
\centering
\begin{tabular}{ll}
\toprule
Figure & Mean Distribution \\
\midrule
Figure \ref{fig:ba-total-offset} & Constant mean of 1 \\
Figure \ref{fig:ba-structured} &  $m + \bigotilde_\ell \boldsymbol{\mu}_\ell$ \\
& $\boldsymbol{\mu}_\ell \sim \mathcal{N}(\mathbf{0}, \mathbf{I}),\text{ }m \sim \mathcal{N}(0, 1)$ \\
Figure \ref{fig:ba-unstructured} (left) & $\mathcal{N}(\mathbf{0}, \frac{1}{20}\mathbf{I})$ \\
Figure \ref{fig:ba-unstructured} (right) & $\frac{\boldsymbol{\omega}}{14}$ \\
& $\boldsymbol{\omega}_{ij} \sim \mathrm{Poisson}(10)$ \\
\bottomrule
\end{tabular}
\caption{The distributions of the means in each of our synthetic data experiments, and the corresponding figures.  Both mean distributions for Figure \ref{fig:ba-unstructured} have a variance of 0.05.  This variance had to be smaller for the unstructured distributions, otherwise the noise would overwhelm the signal in all algorithms.  We make no structural assumptions for Figure \ref{fig:ba-unstructured}; the mean is free to take any form, i.e. it is not KS-decomposable.}
\label{tab:synthetic-means}
\end{table}

All experiments were run on a 2020 MacBook Pro with an M1 chip and 8 GB of RAM.  Our method was implemented using NumPy 1.25.2 \cite{harris_array_2020} and SciPy 1.12.0 \cite{virtanen_scipy_2020}; for precision matrix routines, we used \citeauthor{greenewald_tensor_2019}'s reference implementation of TeraLasso (\citeyear{greenewald_tensor_2019}) as well as GmGM 0.5.3.  GmGM and TeraLasso are fairly similar algorithms; GmGM outputs positive definite matrices\footnote{Semi-definite in \citeauthor{andrew_making_2024} (\citeyear{andrew_making_2024}).} and does not use regularization, while TeraLasso uses regularization and does not require the outputs to be positive definite (as long as the Kronecker sum of the outputs is).  By presenting results with each algorithm, we hope to show that our proposed mean estimation algorithm is not specific to any one method of precision estimation.  All code is provided in the supplementary material.

In this section, we compare the performance of precision matrix algorithms with and without our mean estimation wrapper across a variety of test suites.  We first compare performances on synthetic data, in which the graphs are generated from a Barabasi-Albert (power-law) distribution.  We generate means for our synthetic data from a variety of distributions; see Table \ref{tab:synthetic-means}.  We also compare them on Erdos-Renyi graphs, achieving similar results, but for conciseness we defer those results to the appendix.  Next, we compare performances on the real-world COIL-20 video dataset (\citeauthor{nene_columbia_nodate}, 1996), demonstrating clear improvements.  Finally, we show that, without properly taking into account the mean, precision matrix estimation will yield clearly wrong results on transcriptomics datasets such as the mouse embryo stem cell `E-MTAB-2805' dataset \cite{buettner_computational_2015}.

\begin{figure}
    \centering
    \includegraphics[width=0.5\linewidth]{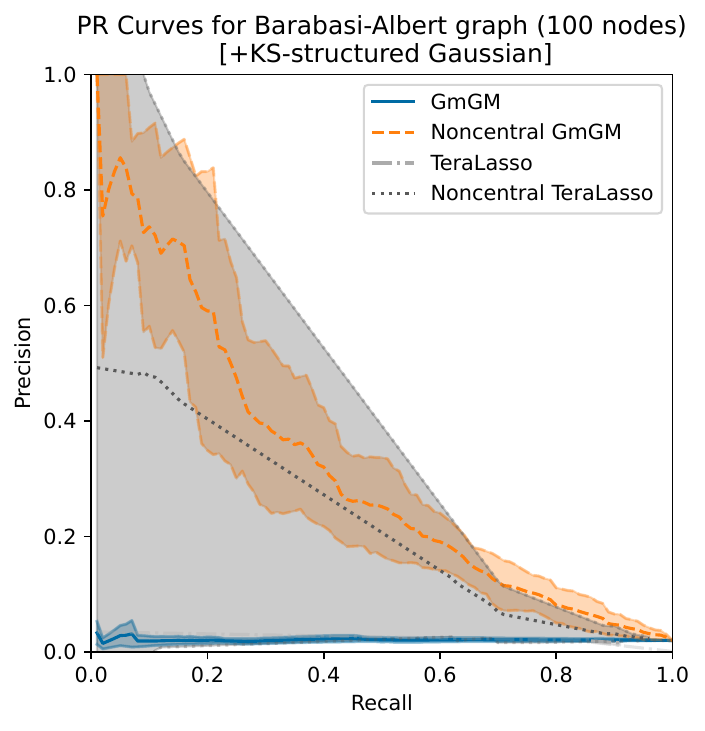}
    \caption{Precision and recall for on a synthetic dataset in which data was generated from a noncentral Kronecker-sum normal distribution.  Error bars are the best/worst performance over 10 trials; the center line is average performance.  Noncentral TeraLasso had large variance in its performance due to TeraLasso's high susceptibility to changes in the regularization parameter (changes as small as $10^{-8}$ can have large effects on the number of edges kept).  This led to our grid search being too coarse-grained during some trials.}
    \label{fig:ba-structured}
\end{figure}

For synthetic data, we already saw the dramatic deterioration in performance of algorithms that make the zero-mean assumption in Figure \ref{fig:ba-total-offset}, while mean-corrected algorithms still perform well.  That alone should produce cause for worry.  Zero-mean algorithms also perform poorly when we generate data from the noncentral Kronecker-sum-structured Gaussian distribution; understandably, our correction is immune to this problem (Figure \ref{fig:ba-structured}).

\begin{figure}[h!]
    \centering
    \begin{subfigure}[t]{0.45\linewidth}
        \centering
        \includegraphics[width=\linewidth]{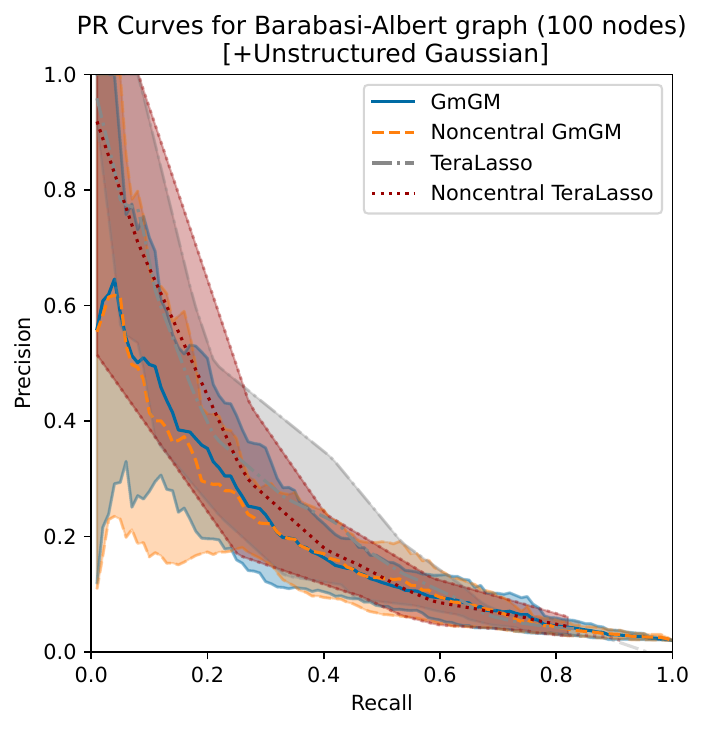}
    \end{subfigure}
    \begin{subfigure}[t]{0.45\linewidth}
        \centering
        \includegraphics[width=\linewidth]{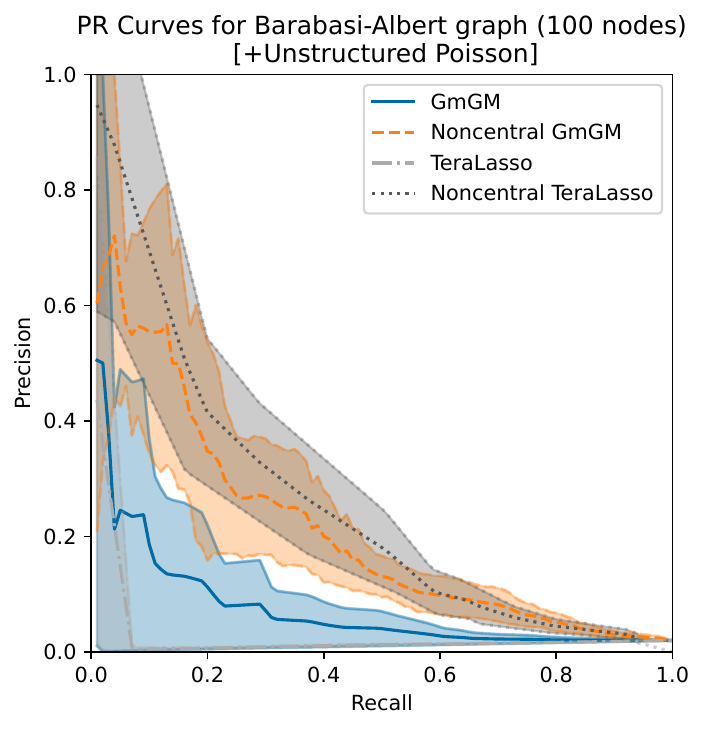}
    \end{subfigure}
    \caption{Precision and recall for on a synthetic dataset in which data was generated from a zero-mean Kronecker-sum normal distribution, with independent Gaussian (left) and Poisson (right) noise added to every element.  Error bars are the best/worst performance over 10 trials; the center line is average performance.}
    \label{fig:ba-unstructured}
\end{figure}

Finally, we may wonder what happens if our mean has absolutely no structure, with every element of our generated matrix having a different, independent mean.  We compared both Gaussian and Poisson-distributed random means (Figure \ref{fig:ba-unstructured}) and found that there was still a drop-off in performance for Poisson means, but not nearly as dramatic as that seen in the other cases considered.  There was no drop-off in the Gaussian case.

When we increased the variance of our means in our unstructured tests, both the standard algorithms and our mean-corrected algorithms dropped in performance equally.  We suspect this is because, for large variances, the signal is overwhelmed by the noise.  For small variances, since each row/column of the mean is uncorrelated, their effects average to zero, and thus our mean-corrected algorithms will also estimate means of zero (making them idendical to the uncorrected case).  The Poisson distribution behaves slightly differently, as the mean is instead a constant nonzero number, and hence our mean-corrected algorithm is able to correctly account for this, whereas the standard algorithms experience degradation in performance similar to that of the constant-mean offset test (Figure \ref{fig:ba-total-offset}), albeit to a lesser degree as the offset is smaller.

The unstructured mean experiment has important consequences: if real-world data has an unstructured mean, then it is less important to correct for the means!  However, it is likely that the mean does have a structured component, as each axis will have its own latent factors affecting the outcome in addition to any unstructured noise that may or may not exist.  To demonstrate this, we run our algorithm on two real-world datasets, showing that our mean correction truly does improve performance in practice.

\begin{figure}[h!]
    \centering
    \includegraphics[width=1\linewidth]{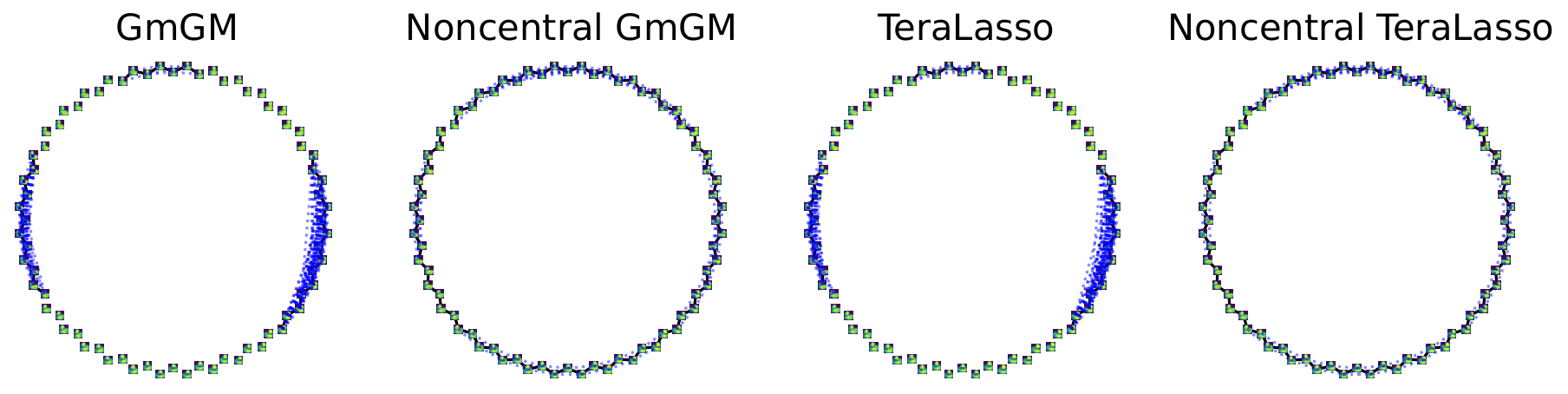}
    \caption{A comparison of the results on the COIL-20 dataset's frames.  Each circle represents the performance of a different algorithm; the prefix `noncentral' denotes that the algorithm uses our mean estimation method.  The images correspond to frames of the video (a duck rotating 360 degrees), arranged in order.  Dashed blue lines indicate a connection between non-adjacent frames; solid black lines indicate adjacent connections.}
    \label{fig:coil-results}
\end{figure}

The COIL-20 dataset consists of 20 videos (each with 72 frames) of objects rotating 360 degrees; it has become somewhat customary for multi-axis methods to present their performance on the rubber duck object of this dataset.  The dataset is chosen because of its ease to validate: frames that are close in time (i.e. adjacent) should be connected in the graph.  We present our results graphically in Figure \ref{fig:coil-results}, from which we can easily see the improvement.  We pre-processed the data using the nonparanormal skeptic \cite{liu_nonparanormal_2012} to relax the Gaussian assumption of the models, and we chose thresholding/regularization parameters such that there would be approximately 144 edges.

We considered `correct' edges to be those connecting adjacent frames or frames with a single frame in between.  GmGM had 59/147 (40\%) correct and TeraLasso had 55/142 (39\%) correct.  Once we wrap the methods in our mean estimation procedure, this rose to 127/147 (86\%) and 126/141 (89\%), respectively.  Furthermore, all `wrong' connections of our noncentral algorithms were nearly correct: with one exception, every connection was between frames with at most two frames between them.  For the standard algorithms, more than a third of all edges failed this criteria.  The noncentral algorithms also had high recall, at 88\% for both noncentral GmGM and TeraLasso.  The uncorrected algorithms had a recall of 41\% and 38\%.  Overall, the results for this experiment are fairly conclusively in favor of using our correction.

Our final experiment is on the E-MTAB-2805 scRNA-seq dataset.  This dataset consists of 288 mouse embryo stem cells and 34,573 unique genes, with each cell being labeled by its stage in the cell cycle (G1, S, or G2M).  It has previously been considered by the creators of scBiGLasso \cite{li_scalable_2022}, in which they limited it to a hand-selected set of 167 mitosis-related genes - we limited it to the same set of genes.  We would expect that cells at the same stage in the same cycle should have some similarities, and hence should have some tendency to cluster together in our learned graphs.  This tendency may be weak - scRNA-seq data is very noisy, and many other biological signals are competing for influence on a cell's expression profile - but it should exist.

We can use `assortativity' as a measure for the tendency for cells within a stage to connect: assortativity ranges from -1 to 1, and corresponds to a Pearson correlation coefficient of `within-stage' degrees of adjacent nodes.  When it is positive, it indicates the strength of the tendency of nodes in the same cell cycle stage to connect to each other.  If it is negative, cells in differing cell cycle stages tend to be connected to each other.  If zero, there is no tendency; the connections are effectively random.

\begin{figure}
    \centering
    \begin{subfigure}[t]{0.45\linewidth}
        \centering
        \includegraphics[width=\linewidth]{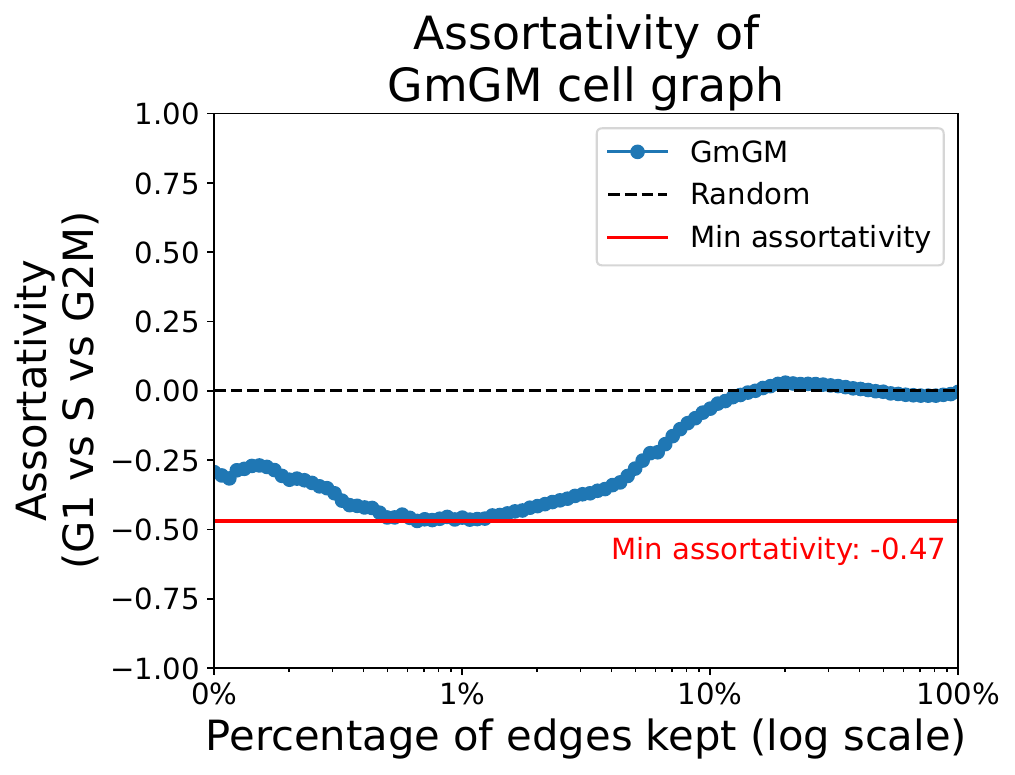}
    \end{subfigure}
    \begin{subfigure}[t]{0.45\linewidth}
        \centering
        \includegraphics[width=\linewidth]{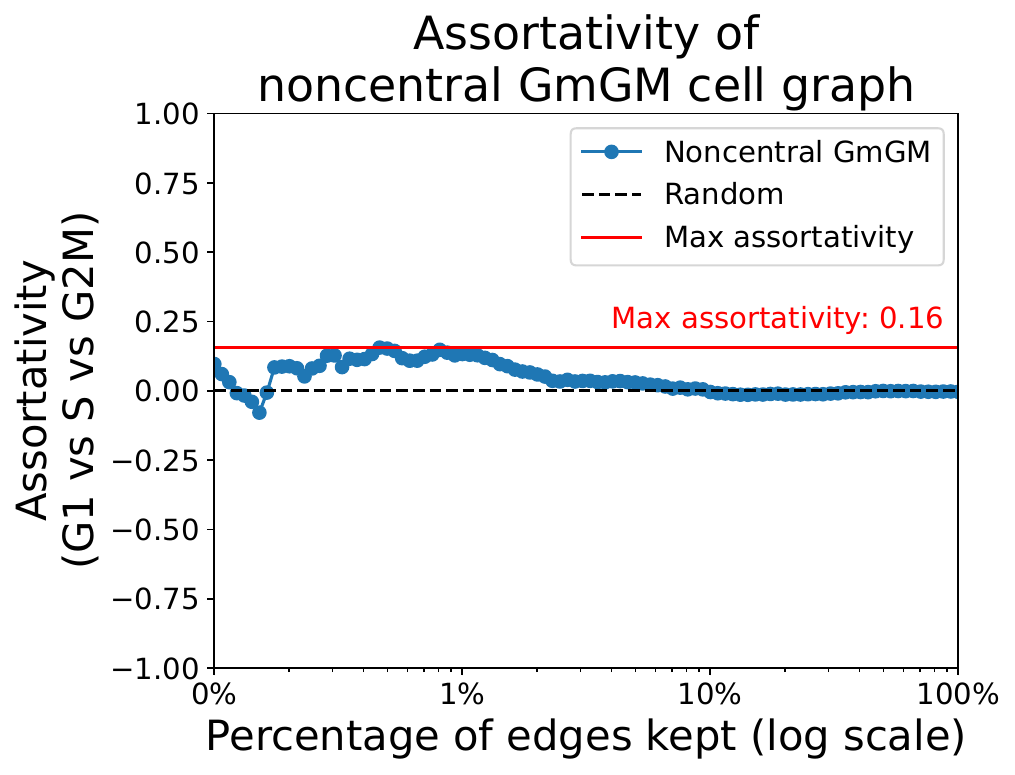}
    \end{subfigure}
    \caption{Assortativity of standard GmGM and our mean-corrected GmGM as we vary the number of edges kept, on the E-MTAB-2805 cell-cycle dataset.  The cells were labeled by their stage in the cell cycle (G1/S/G2M).}
    \label{fig:e-mtab-2805}
\end{figure}

When using standard algorithms, without correcting for the mean, we find that, not only is there not a tendency for cells to connect to same-stage cells, but there is a very strong tendency for them \textit{not to connect} (assortativity = -0.47).  This runs contrary to intuition: cells with some factor in common should not be caused by that same factor to look different.  When correcting for the means, we instead see a modest positive tendency (assortativity = 0.16), which is more reasonable.  Figure \ref{fig:e-mtab-2805} shows the change in assortativity for GmGM, with and without our correction, as we vary the threshold of how many edges to keep.  This shows that our findings are not dependent on our choice of penalty, but rather are a consistent feature of the problem.  We did not run the experiment with TeraLasso due to its slow runtime, but the results for it would likely be broadly similar (as they have been in the rest of this paper).

\section{Conclusion}

Current Kronecker-structured models do not take into account the mean of the data.  As one might expect, this can lead to wildly incorrect inferences, as best exemplified in Figures \ref{fig:ba-total-offset} and \ref{fig:e-mtab-2805}.  As shown in these examples, failure to take into account the mean not only results in incorrect inferences, but in fact it can result in inferences directly opposite to reality: as thresholding/regularization increased, algorithms that did not take into account the mean decreased in both precision and recall, and identified an extremely strong (and incorrect) repulsive force between cells in the same stage of the cell cycle.

In this paper, we have demonstrated that this can be fixed by adding a mean-estimating wrapper around standard algorithms.  While the means and precisions are nonseperable in the KS-structured case, and hence coincide with a nonconvex optimization problem, we proved that there is a unique global optimum, to which our algorithm converges.  We implemented our wrapper as a `drop-in' wrapper (so that it can easily be used with any `black box' standard precision-estimating algorithm), but we also point to aspects of our method that make it easier to more tightly integrate it into precision-estimating frameworks.  In particular, it can be made to preserve the `only-one-eigendecomposition' property of GmGM, which is highly beneficial for scalability.

As future work, we aim to use this model to better understand gene regulatory networks and the cellular microenvironment in single-cell omics data.  We would also like to generalize the method to the multi-modal case, as in GmGM.

\bibliography{references}

\appendix

\section{Deferred Proofs}
\label{apn:proofs}

In this portion of the appendix, we will formally prove the claims presented in the paper, namely we will prove that the locations of coordinate-wise minima reported in the paper are indeed accurate, and furthermore that the full optimization problem has a unique solution.

\subsection{Derivation of Gradients}

First, we will derive gradients with respect to all parameters.  Recall that the negative log-likelihood is, up to an additive constant:

\begin{align*}
    \mathrm{NLL}\left(\mathcal{D}\right) &\propto \frac{-1}{2}\log\left|\mathbf{\Omega}\right| + \frac{1}{2}\left(\boldsymbol{x} - \boldsymbol{\omega}\right)^T\mathbf{\Omega}\left(\boldsymbol{x} - \boldsymbol{\omega}\right)
\end{align*}

We do not need to work out the gradient w.r.t. $\mathbf{\Psi}_\ell$, as we defer that optimization task to algorithms such as TeraLasso.  Thus, we are left with the following quadratic function:

\begin{align*}
    \mathrm{NLL}\left(\mathcal{D}\right) &\propto \frac{1}{2}\left(\boldsymbol{x} - \boldsymbol{\omega}\right)^T\mathbf{\Omega}\left(\boldsymbol{x} - \boldsymbol{\omega}\right)
\end{align*}

We will differentiate with respect to the identifiable parameters $\tilmu_\ell, m$.  Generically, for any parameter $p$, the derivative will look like:

\begin{align*}
    \pderiv{p}\mathrm{NLL}\left(\mathcal{D}\right) &= -\left(\boldsymbol{x} - \boldsymbol{\omega}\right)^T\mathbf{\Omega}\left(\pderiv{p}\boldsymbol{\omega}\right)
\end{align*}

Oberve that:

\begin{align*}
    \pderiv{m}\boldsymbol{\omega} &= \pderiv{m} \left(m\mathbf{1}_{d_\forall} + \bigotilde_\ell \tilmu_\ell\right) \\
    &= \mathbf{1}_{d_\forall} \\
    \pderiv{\tilmu_\ell}\boldsymbol{\omega} &= \pderiv{\tilmu_\ell} \bigotilde_\ell \tilmu_\ell \\
    &= \pderiv{\tilmu_\ell} \left(\tilmu_\ell \otilde \tilmu_{\backslash\ell}\right) \tag{up to permutation} \\
    &= \pderiv{\tilmu_\ell} \left(\tilmu_\ell \otimes \mathbf{1}_{d_{\backslash\ell}} + \mathbf{1}_{d_\ell} \otimes \tilmu_{\backslash\ell}\right) \\
    &= \mathbf{I}_{d_\ell} \otimes \mathbf{1}_{d_{\backslash\ell}}
\end{align*}

The latter derivation is also obvious from our ability to rewrite $\tilmu_\ell \otimes \mathbf{1}_{d_{\backslash\ell}}$ as $\left(\mathbf{I}_{d_\ell} \otimes \mathbf{1}_{d_{\backslash\ell}}\right)\tilmu_\ell$.  We gave the derivative up to permutation, for notational simplicity (since we are only interested in when it equals zero).  Without the permutation, it is easy to see that it is $\mathbf{1}_{d_{<\ell}} \otimes \mathbf{I}_{d_\ell} \otimes \mathbf{1}_{d_{>\ell}}$.  Finally, note that the latter derivative would be analogous if done w.r.t. the unconstrained parameters $\boldsymbol{\mu}_\ell$.

\subsection{Derivation of the Coordinate-wise Minimum of $m$}

To derive the coordinate-wise optimum of $m$, we will find when the gradient is zero.

\begin{align*}
    \pderiv{m}\mathrm{NLL}\left(\mathcal{D}\right) &= -\left(\boldsymbol{x} - \boldsymbol{\omega}\right)^T\mathbf{\Omega}\mathbf{1}_{d_\forall} \\
    &= -\left(\boldsymbol{x} - \boldsymbol{\omega}\right)^T\left(\bigoplus_\ell \mathbf{\Psi}_\ell\right)\left(\bigotimes_\ell\mathbf{1}_{d_\ell}\right) \\
    &= -\left(\boldsymbol{x} - \boldsymbol{\omega}\right)^T\left(\bigotilde_\ell \mathbf{\Psi}_\ell\mathbf{1}_\ell\right) \\
    &= \left(m\mathbf{1}_{d_\forall} - \boldsymbol{x} + \bigotilde_\ell\tilmu_\ell\right)^T\left(\bigotilde_\ell \mathbf{\Psi}_\ell\mathbf{1}_\ell\right) \\
\end{align*}

Before proceeding, let's prove a helpful lemma:

\begin{lemma}
    \label{lem:sum-of-kronsum}
    For any set of $d_\ell \times d_\ell$ matrices $\mathbf{M}_\ell$:
    
    \begin{align*}
        \mathbf{1}_{d_\forall}^T\left(\bigoplus_\ell\mathbf{M}_\ell\right)\mathbf{1}_{d_\forall} = \sum_\ell d_{\backslash\ell}\mathbf{1}_{d_\ell}^T\mathbf{M}_\ell\mathbf{1}_{d_\ell}
    \end{align*}
\end{lemma}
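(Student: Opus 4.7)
The plan is to exploit the mixed-product property of the Kronecker product together with the fact that the all-ones vector of length $d_\forall$ itself factors as a Kronecker product: $\mathbf{1}_{d_\forall} = \mathbf{1}_{d_{<\ell}}\otimes\mathbf{1}_{d_\ell}\otimes\mathbf{1}_{d_{>\ell}}$ for every $\ell$. Since the Kronecker sum is linear in its summands, it suffices to evaluate $\mathbf{1}_{d_\forall}^T(\mathbf{I}_{d_{<\ell}}\otimes\mathbf{M}_\ell\otimes\mathbf{I}_{d_{>\ell}})\mathbf{1}_{d_\forall}$ for each fixed $\ell$ and then sum.

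First I would substitute the definition $\bigoplus_\ell \mathbf{M}_\ell = \sum_\ell \mathbf{I}_{d_{<\ell}}\otimes\mathbf{M}_\ell\otimes\mathbf{I}_{d_{>\ell}}$ and pull the summation outside the sandwiched quadratic form. Next, I would apply the factorization of $\mathbf{1}_{d_\forall}$ on both sides of each summand and use $(\mathbf{A}\otimes\mathbf{B}\otimes\mathbf{C})(\mathbf{u}\otimes\mathbf{v}\otimes\mathbf{w}) = (\mathbf{A}\mathbf{u})\otimes(\mathbf{B}\mathbf{v})\otimes(\mathbf{C}\mathbf{w})$ to obtain
\begin{align*}
    \mathbf{1}_{d_\forall}^T(\mathbf{I}_{d_{<\ell}}\otimes\mathbf{M}_\ell\otimes\mathbf{I}_{d_{>\ell}})\mathbf{1}_{d_\forall} &= (\mathbf{1}_{d_{<\ell}}^T\mathbf{1}_{d_{<\ell}})\otimes(\mathbf{1}_{d_\ell}^T\mathbf{M}_\ell\mathbf{1}_{d_\ell})\otimes(\mathbf{1}_{d_{>\ell}}^T\mathbf{1}_{d_{>\ell}}).
\end{align*}
The outer two factors are scalars equal to $d_{<\ell}$ and $d_{>\ell}$ respectively, and the Kronecker product of scalars is ordinary multiplication, so this collapses to $d_{<\ell}d_{>\ell}\mathbf{1}_{d_\ell}^T\mathbf{M}_\ell\mathbf{1}_{d_\ell} = d_{\backslash\ell}\mathbf{1}_{d_\ell}^T\mathbf{M}_\ell\mathbf{1}_{d_\ell}$. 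Summing over $\ell$ yields the claim.

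There is essentially no obstacle here; the only mild subtlety is notational bookkeeping for the $\ell = 1$ and $\ell = K$ edge cases, where $d_{<\ell}$ or $d_{>\ell}$ should be interpreted as $1$ (with the corresponding identity being a $1\times 1$ block that can be omitted from the Kronecker product). With that convention in place, the calculation above goes through uniformly, and the lemma follows immediately.
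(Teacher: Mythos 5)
Your proof is correct and follows essentially the same route as the paper's: expand the Kronecker sum, factor $\mathbf{1}_{d_\forall}$ as a Kronecker product of all-ones vectors, apply the mixed-product property to each summand, and collapse the scalar factors $d_{<\ell}d_{>\ell}$ into $d_{\backslash\ell}$. The remark about the $\ell=1$ and $\ell=K$ edge cases is a minor bookkeeping point the paper leaves implicit; nothing of substance differs.
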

\begin{proof}
    \begin{align*}
        \mathbf{1}_{d_\forall}^T\left(\bigoplus_\ell\mathbf{M}_\ell\right)\mathbf{1}_{d_\forall} &= \mathbf{1}_{d_\forall}^T\left(\sum_\ell\mathbf{I}_{d_{<\ell}} \otimes \mathbf{M}_\ell\otimes \mathbf{I}_{d_{>\ell}}\right)\mathbf{1}_{d_\forall}
    \end{align*}
    \begin{align*}
        &= \sum_\ell \mathbf{1}_{d_{<\ell}}^T\mathbf{1}_{d_{<\ell}} \otimes \mathbf{1}_{d_\ell}^T\mathbf{M}_\ell\mathbf{1}_{d_\ell} \otimes \mathbf{1}_{d_{>\ell}}^T\mathbf{1}_{d_{>\ell}} \\
        &= \sum_\ell d_{<\ell}d_{>\ell}\mathbf{1}_{d_\ell}^T\mathbf{M}_\ell\mathbf{1}_{d_\ell} \\
        &= \sum_\ell d_{\backslash\ell}\mathbf{1}_{d_\ell}^T\mathbf{M}_\ell\mathbf{1}_{d_\ell}
    \end{align*}
\end{proof}

Recall that $\mathbf{1}_{d_\ell}^T\mathbf{\Psi}_\ell\mathbf{1}_{d_\ell} \overset{\mathrm{def}}{=} \theta_\ell$.  At zero, the equality can be reshaped into the following:

\begin{align*}
    m\left(\bigotimes_\ell \mathbf{1}_{d_\ell}^T\right)\bigotilde_\ell \mathbf{\Psi}_\ell\mathbf{1}_\ell &= \left(\mathbf{x} - \bigotilde_\ell\tilmu_\ell\right)^T\bigotilde_\ell \mathbf{\Psi}_\ell\mathbf{1}_\ell \\
    \implies m\sum_\ell d_{\backslash\ell}\theta_\ell &= \left(\mathbf{x} - \bigotilde_\ell\tilmu_\ell\right)^T\bigotilde_\ell \mathbf{\Psi}_\ell\mathbf{1}_\ell \tag{Lemma \ref{lem:sum-of-kronsum}}\\
    \implies m &= \frac{\left(\mathbf{x} - \bigotilde_\ell\tilmu_\ell\right)^T\bigotilde_\ell \mathbf{\Psi}_\ell\mathbf{1}_\ell}{\sum_\ell d_{\backslash\ell}\theta_\ell}
\end{align*}

\subsection{Derivation of the Coordinate-wise Minima of $\tilmu_\ell$}

Rather than considering the gradients, we will frame this problem as an equality-constrained QP problem, which has closed-form minima.  Here, we will let $\mathbf{x}_\ell$ be our vectorized dataset, vectorized in an order such that axis $\ell$ was considered the first axis.  We'll further write $\widetilde{\mathbf{x}}_\ell = \mathbf{x_\ell} - m\mathbf{1}_{d_\forall}$ to keep our equations compact, and represent $\bigotilde_{\ell'}\tilmu_{\ell'}$ as $\tilmu_\ell \otilde \tilmu_{\backslash_\ell}$.

\begin{align*}
    \underset{\tilmu_\ell}{\mathrm{argmin}}&\hspace{5pt} \frac{1}{2}\left(\widetilde{\mathbf{x}}_\ell - \bigotilde_{\ell'}\tilmu_{\ell'}\right)^T\mathbf{\Omega}\left(\widetilde{\mathbf{x}}_\ell - \bigotilde_{\ell'}\tilmu_{\ell'}\right) \\
    \mathrm{where}&\hspace{5pt} \tilmu_\ell^T\mathbf{1}_{d_\ell} = \mathbf{0}
\end{align*}

We then expand the equation, removing terms that do not involve $\tilmu_\ell$ and recalling that $\mathbf{u}^T\mathbf{M}\mathbf{v} = \mathbf{v}^T\mathbf{M}\mathbf{u}$.

\begin{align*}
    \underset{\tilmu_\ell}{\mathrm{argmin}}&\hspace{2pt} \frac{1}{2} \left(\bigotilde_{\ell'}\tilmu_{\ell'}^T\right)\mathbf{\Omega}\left(\bigotilde_{\ell'}\tilmu_{\ell'}\right)  - \widetilde{\mathbf{x}}_\ell^T \mathbf{\Omega}\left(\bigotilde_{\ell'}\tilmu_{\ell'}\right) \\
    \mathrm{where}&\hspace{5pt} \tilmu_\ell^T\mathbf{1}_{d_\ell} = \mathbf{0}
\end{align*}

Again, we split $\bigotilde_{\ell'}\tilmu_{\ell'}$ into $\tilmu_\ell \otilde \tilmu_{\backslash_\ell}$ and remove terms not containing $\tilmu_\ell$.

\begin{align*}
    \underset{\tilmu_\ell}{\mathrm{argmin}}&\hspace{2pt}\left\{\begin{matrix}
        \frac{1}{2}\left(\tilmu_\ell \otimes \mathbf{1}_{d_{\backslash\ell}}\right)^T\mathbf{\Omega}\left(\tilmu_\ell \otimes \mathbf{1}_{d_{\backslash\ell}}\right) \\
        + \left(\mathbf{1}_{d_\ell} \otimes \tilmu_{{\backslash\ell}} - \widetilde{\mathbf{x}}_\ell\right)^T\mathbf{\Omega}\left(\tilmu_\ell \otimes \mathbf{1}_{d_{\backslash\ell}}\right)
    \end{matrix}\right.\\
    \mathrm{where}&\hspace{5pt} \tilmu_\ell^T\mathbf{1}_{d_\ell} = \mathbf{0}
\end{align*}

Finally, recall that $\tilmu_\ell \otimes \mathbf{1}_{d_{\backslash\ell}} = \left(\mathbf{I}_{d_\ell}\otimes\mathbf{1}_{\backslash\ell}\right)\tilmu_\ell$.  Let's define the following constants:

\begin{align}
    \mathbf{A}_\ell &= \left(\mathbf{I}_{d_\ell}\otimes\mathbf{1}_{d_{\backslash\ell}}\right)^T\mathbf{\Omega}\left(\mathbf{I}_{d_\ell}\otimes\mathbf{1}_{d_{\backslash\ell}}\right) \label{eq:a-def}\\
    \mathbf{b}_\ell &= \left(\mathbf{1}_{d_\ell} \otimes \tilmu_{{\backslash\ell}} - \widetilde{\mathbf{x}}_\ell\right)^T\mathbf{\Omega}\left(\mathbf{I}_{d_\ell}\otimes\mathbf{1}_{d_{\backslash\ell}}\right) \label{eq:b-def}
\end{align}

Note that this definition is different from the one in the main paper; in the next section we will show that they are equivalent.  With these constants, we can rewrite our minimization problem as:

\begin{align*}
    \underset{\tilmu_\ell}{\mathrm{argmin}}&\hspace{5pt}
        \frac{1}{2}\tilmu_\ell^T \mathbf{A}_\ell \tilmu_\ell + \mathbf{b}_\ell^T\tilmu_\ell\\
    \mathrm{where}&\hspace{5pt} \tilmu_\ell^T\mathbf{1}_{d_\ell} = \mathbf{0}
\end{align*}

This is an equality-constrained QP problem; these have closed-form solutions.  Letting $\lambda$ be the Lagrange multiplier that appears when considering the constraint, we know that the solution is the solution to a linear equation:

\begin{align*}
    \begin{bmatrix}
        \mathbf{A}_\ell & \mathbf{1}_{d_\ell} \\
        \mathbf{1}_{d_\ell}^T & \mathbf{0}
    \end{bmatrix}\begin{bmatrix}
        \tilmu_\ell \\ \lambda
    \end{bmatrix} &= \begin{bmatrix}
        -\mathbf{b}_\ell \\ \mathbf{0}
    \end{bmatrix}
\end{align*}

We then arrive at the solution with some simple algebra.

\begin{align*}
    &&\mathbf{A}_\ell\tilmu_\ell + \mathbf{1}_{d_\ell}\lambda &= -\mathbf{b}_\ell \\
    &\implies&\mathbf{1}_{d_\ell}^T\tilmu_\ell + \mathbf{1}^T_{d_\ell}\mathbf{A}^{-1}_\ell\mathbf{1}_{d_\ell}\lambda &= -
    \mathbf{1}_{d_\ell}^T\mathbf{A}^{-1}_\ell\mathbf{b}_\ell \\
    &\implies& \lambda &= \frac{-
    \mathbf{1}_{d_\ell}^T\mathbf{A}^{-1}_\ell\mathbf{b}_\ell}{
    \mathbf{1}_{d_\ell}^T\mathbf{A}^{-1}_\ell\mathbf{1}_{d_\ell}} \\
    &\implies& \mathbf{A}_\ell\tilmu_\ell &= \frac{
    \mathbf{1}_{d_\ell}^T\mathbf{A}^{-1}_\ell\mathbf{b}_\ell}{
    \mathbf{1}_{d_\ell}^T\mathbf{A}^{-1}_\ell\mathbf{1}_{d_\ell}}\mathbf{1}_{d_\ell} - \mathbf{b}_\ell
\end{align*}

Which leads us to the following closed-form solution:

\begin{align*}
    \tilmu_\ell &= \frac{
    \mathbf{1}_{d_\ell}^T\mathbf{A}_\ell^{-1}\mathbf{b}_\ell}{
    \mathbf{1}_{d_\ell}^T\mathbf{A}_\ell^{-1}\mathbf{1}_{d_\ell}}\mathbf{A}_\ell^{-1}\mathbf{1}_{d_\ell} - \mathbf{A}_\ell^{-1}\mathbf{b}_\ell 
\end{align*}

\subsection{Derivation of $\mathbf{A}_\ell$ and $\mathbf{b}_\ell$}

While the derivation of $\mathbf{A}_\ell$ and $\mathbf{b}_\ell$ given in the previous section yields a formula for the coordinate-wise minima of $\tilmu_\ell$, calculating $\mathbf{A}_\ell$ and $\mathbf{b}_\ell$ using Equations \ref{eq:a-def} and \ref{eq:b-def} requires multiplying together very large matrices ($\mathbf{\Omega}$ is $d_\forall \times d_\forall$).  In this section, we will produce a simpler formula involving only $d_\ell \times d_\ell$ matrix manipulations.

We will make use of the fact that $\mathbf{\Omega}$ has a lower dimensional parameterization in terms of $\mathbf{\Psi}_\ell$.  We could instead have chosen to parameterize it in terms of the identifiable parameterization $\left(\widetilde{\mathbf{\Psi}}_\ell, \tau\right)$, but there is not much gain in doing so; it would merely make the formulas presented here more complicated.

\begin{align*}
    \mathbf{A}_\ell \hspace{-2pt}&= \left(\mathbf{I}_{d_\ell}\otimes\mathbf{1}_{d_{\backslash\ell}}\right)^T\mathbf{\Omega}\left(\mathbf{I}_{d_\ell}\otimes\mathbf{1}_{d_{\backslash\ell}}\right) \\
    &=\hspace{-2pt} \left(\mathbf{I}_{d_\ell}\otimes\mathbf{1}_{d_{\backslash\ell}}\right)^T\hspace{-4pt}\left(\mathbf{\Psi}_\ell \otimes \mathbf{I}_{d_{\backslash\ell}} + \mathbf{I}_{d_\ell} \otimes \mathbf{\Psi}_{{\backslash\ell}}\right)\hspace{-2pt}\left(\mathbf{I}_{d_\ell}\otimes\mathbf{1}_{d_{\backslash\ell}}\right) \\
    &= \left(\mathbf{\Psi}_\ell \otimes \mathbf{1}_{d_{\backslash\ell}}^T\mathbf{1}_{d_{\backslash\ell}} + \mathbf{I}_{d_\ell}\otimes\mathbf{1}_{d_{\backslash\ell}}^T\mathbf{\Psi}_{\backslash\ell}\mathbf{1}_{d_{\backslash\ell}}\right) \\
    &= d_{\backslash_\ell}\mathbf{\Psi}_\ell + \sum_{\ell'\neq\ell} \frac{d_\forall}{d_\ell d_{\ell'}}\theta_\ell\mathbf{I}_{d_\ell} \tag{Lemma \ref{lem:sum-of-kronsum}} \\
    &= d_{\backslash_\ell}\mathbf{\Psi}_\ell + \theta_{\backslash\ell}\mathbf{I}_{d_\ell}
\end{align*}

\begin{align*}
    \mathbf{b}_\ell &= \left(\mathbf{1}_{d_\ell} \otimes \tilmu_{{\backslash\ell}} - \widetilde{\mathbf{x}}_\ell\right)^T\mathbf{\Omega}\left(\mathbf{I}_{d_\ell}\otimes\mathbf{1}_{d_{\backslash\ell}}\right) \\
    &= \left(\mathbf{1}_{d_\ell} \otimes \tilmu_{{\backslash\ell}} - \widetilde{\mathbf{x}}_\ell\right)^T\left(\mathbf{\Psi}_{\ell}\otimes\mathbf{1}_{d_{\backslash\ell}} + \mathbf{I}_{d_\ell}\otimes\mathbf{\Psi}_{\backslash\ell}\mathbf{1}_{d_{\backslash\ell}}\right) \\
    &= \mathbf{1}_{d_\ell}^T\mathbf{\Psi}_{\ell}\otimes\tilmu_{\backslash\ell}^T\mathbf{1}_{d_{\backslash\ell}} + \mathbf{1}_{d_\ell}\otimes\tilmu_{d_{\backslash\ell}}^T\mathbf{\Psi}_{\backslash\ell}\mathbf{1}_{d_{\backslash\ell}} \\
    &\hspace{10pt}-\widetilde{\mathbf{x}}_\ell^T\left(\mathbf{\Psi}_{\ell}\otimes\mathbf{1}_{d_{\backslash\ell}} + \mathbf{I}_{d_\ell}\otimes\mathbf{\Psi}_{\backslash\ell}\mathbf{1}_{d_{\backslash\ell}}\right)
\end{align*}

Here, we can observe that $\tilmu_{\backslash\ell}^T\mathbf{1}_{d_{\backslash\ell}} = 0$ due to our constraints.  Thus, we get:

\begin{align*}
    \mathbf{b}_\ell &= \left[\mathbf{1}_{d_{\backslash\ell}}^T\mathbf{\Psi}_\ell\mathbf{1}_{d_{\backslash\ell}}\right]\mathbf{1}_{d_\ell}\hspace{-2pt}- \widetilde{\mathbf{x}}_\ell^T\left(\mathbf{\Psi}_{\ell}\otimes\mathbf{1}_{d_{\backslash\ell}}\hspace{-1pt}+ \mathbf{I}_{d_\ell}\otimes\mathbf{\Psi}_{\backslash\ell}\mathbf{1}_{d_{\backslash\ell}}\right) \\
\end{align*}

Expanding out $\widetilde{\mathbf{x}}_\ell$ into $\mathbf{x}_\ell - m\mathbf{1}_{d_\forall}$ yields the definition of $b_\ell$ given in the main paper.

\begin{align*}
    &m\mathbf{1}_{d_\forall}^T \left(\mathbf{\Psi}_{\ell}\otimes\mathbf{1}_{d_{\backslash\ell}}+ \mathbf{I}_{d_\ell}\otimes\mathbf{\Psi}_{\backslash\ell}\mathbf{1}_{d_{\backslash\ell}}\right) \\
    &= m d_{\backslash\ell}\mathbf{1}_{d_\ell}^T\mathbf{\Psi}_\ell + m\theta_{\backslash\ell}\mathbf{1}_{d_{\backslash\ell}}
\end{align*}

Unfortunately, this still involves a large multiplication; can we express it in a simpler way?

We can, but it involves introducing the concept of a matricization of a tensor: for a $d_1 \times ... \times d_K$ tensor $\mathcal{T}$, the $\ell$-matricization of it, $\mathrm{mat}_\ell\left[\mathcal{T}\right]$ is a $d_\ell \times d_{\backslash\ell}$ matrix constructed by stacking all axes together except $\ell$.  This allows us a more formal definition of $\mathbf{x}_\ell$: $\mathbf{x}_\ell \overset{\mathrm{def}}{=} \mathrm{vec}\hspace{3pt}\mathrm{mat}_\ell\left[\mathcal{D}\right]$.

We can then utilize a well-known convenient relation between Kronecker products and vectorization\footnote{There are two conventions for $\mathrm{vec}$, rows-first or columns-first; that convention affects the order of multiplication presented here.  We use the rows-first convention as that aligns with the convention used for $\mathrm{mat}$.}:

\begin{align*}
    \mathrm{vec}\left[\mathbf{M}\right]^T\left(\mathbf{U}\otimes \mathbf{V}\right) &= \mathrm{vec}\left[\mathbf{U}^T\mathbf{M}\mathbf{V}\right]^T
\end{align*}

Note that $\mathbf{\Psi}_\ell$ is symmetric, so the transpose is irrelevant.  This allows us to simplify.

\begin{align*}
    &\mathrm{vec}\hspace{3pt}\mathrm{mat}\left[\mathcal{D}\right]^T\left(\mathbf{\Psi}_{\ell}\otimes\mathbf{1}_{d_{\backslash\ell}}+ \mathbf{I}_{d_\ell}\otimes\mathbf{\Psi}_{\backslash\ell}\mathbf{1}_{d_{\backslash\ell}}\right) \\
    &= \mathrm{vec}\left[\mathbf{\Psi}_\ell\mathrm{mat}_\ell\left[\mathcal{D}\right]\mathbf{1}_{d_{\backslash\ell}} \right]^T + \mathrm{vec}\left[\mathrm{mat}_\ell\left[\mathcal{D}\right]\mathbf{\Psi}_{\backslash\ell}\mathbf{1}_{d_{\backslash\ell}} \right]^T
\end{align*}

As a final step for simplification, we recall another relation from \citeauthor{kolda_tensor_2009} (\citeyear{kolda_tensor_2009}).  First, let $\mathcal{D} \times_\ell \mathbf{M}$ represent multiplying a tensor $\mathcal{D}$ along its $\ell$th axis by a matrix $\mathbf{M}$.  Note that for matrices, $\mathbf{N} \times_0 \mathbf{M} = \mathbf{M}^T\mathbf{N}$ and $\mathbf{N} \times_1 \mathbf{M} = \mathbf{N}\mathbf{M}$.  Intuitively, it corresponds to batch matrix multiplication when the tensor has more than two axes.

\begin{align*}
    \mathrm{mat}_\ell\left[\mathcal{T}\right]\bigotimes_{\ell'\neq\ell}\mathbf{M}_{\ell'} = \mathrm{mat}_\ell\left[\mathcal{T} \bigtimes_{\ell'\neq\ell}\mathbf{M}_{\ell'}\right]
\end{align*}

This allows us to rewrite our expression as:

\begin{align*}
    &\hspace{5pt}\mathrm{vec}\left[\mathbf{\Psi}_\ell\mathrm{mat}_\ell\left[\mathcal{D}\right]\mathbf{1}_{d_{\backslash\ell}} \right]^T + \mathrm{vec}\left[\mathrm{mat}_\ell\left[\mathcal{D}\right]\mathbf{\Psi}_{\backslash\ell}\mathbf{1}_{d_{\backslash\ell}} \right]^T \\
    &= \mathrm{vec}\left[\mathbf{\Psi}_\ell\mathrm{mat}_\ell\left[\mathcal{D}\right]\mathbf{1}_{d_{\backslash\ell}} \right]^T
    \\&\hspace{5pt}+ \sum_{\ell'\neq\ell}\mathrm{vec}\left[\mathrm{mat}_\ell\left[\mathcal{D} \times_{\ell'}\mathbf{\Psi}_{\ell'}\right]\mathbf{1}_{d_{\backslash\ell}} \right]^T
\end{align*}

These always result in vectors, so we can drop the $\mathrm{vec}$ prefix.

\begin{align*}
    \mathbf{\Psi}_\ell\mathrm{mat}_\ell\left[\mathcal{D}\right]\mathbf{1}_{d_{\backslash\ell}}+ \sum_{\ell'\neq\ell}\mathrm{mat}_\ell\left[\mathcal{D} \times_{\ell'}\mathbf{\Psi}_{\ell'}\right]\mathbf{1}_{d_{\backslash\ell}}
\end{align*}

This looks complicated, but has a fairly straightforward interpretation.  The $\mathbf{\Psi}_\ell\mathrm{mat}_\ell\left[\mathcal{D}\right]\mathbf{1}_{d_{\backslash\ell}}$ term corresponds to summing over all axes of $\mathcal{D}$ except $\ell$, and then transforming the sum by $\mathbf{\Psi}_\ell$.  Each $\mathrm{mat}_\ell\left[\mathcal{D} \times_{\ell'}\mathbf{\Psi}_{\ell'}\right]\mathbf{1}_{d_{\backslash\ell}}$ term can be interpreted as a batch matrix multiplication along axis $\ell'$, before summing over all axes but $\ell$.  Of course, in practice it is more efficient to perform some sums first before the multiplication, as given below, but it causes rather cumbersome notation.

\begin{align*}
    \sum_{\ell'\neq\ell}\mathrm{mat}_\ell\left[\left(\mathcal{D} \bigtimes_{\ell\neq\ell''\neq\ell'} \mathbf{1}_{\backslash\ell''}\right) \times_{\ell'}\mathbf{\Psi}_{\ell'}\mathbf{1}_{\backslash\ell'}\right]
\end{align*}

To summarize, we have that:

\begin{align*}
    \mathbf{b}_\ell &= \left[\mathbf{1}_{d_{\backslash\ell}}^T\mathbf{\Psi}_\ell\mathbf{1}_{d_{\backslash\ell}}\right]\mathbf{1}_{d_\ell} + m d_{\backslash\ell}\mathbf{1}_{d_\ell}^T\mathbf{\Psi}_\ell + m\theta_{\backslash\ell}\mathbf{1}_{d_{\backslash\ell}} \\
    &\hspace{5pt} - \mathbf{\Psi}_\ell\mathrm{mat}_\ell\left[\mathcal{D}\right]\mathbf{1}_{d_{\backslash\ell}}- \sum_{\ell'\neq\ell}\mathrm{mat}_\ell\left[\mathcal{D} \times_{\ell'}\mathbf{\Psi}_{\ell'}\right]\mathbf{1}_{d_{\backslash\ell}}
\end{align*}

\subsection{Invertibility of $\mathbf{A}_\ell$}

Recall that, from the previous section, $\mathbf{A}_\ell$ can be written as:

\begin{align*}
    \mathbf{A}_\ell &= \left(\mathbf{I}_{d_\ell}\otimes\mathbf{1}_{d_{\backslash\ell}}\right)^T\mathbf{\Omega}\left(\mathbf{I}_{d_\ell}\otimes\mathbf{1}_{d_{\backslash\ell}}\right)
\end{align*}

$\mathbf{\Omega}_\ell$ is a precision matrix, i.e. positive definite.  Furthermore, $\left(\mathbf{I}_{d_\ell}\otimes\mathbf{1}_{d_{\backslash\ell}}\right)$ never maps nonzero vectors to the zero vector (it is a matrix that `copy-and-pastes' the input vector $d_{\backslash\ell}$ times).

\begin{align*}
    \mathbf{v}^T\mathbf{A}_\ell\mathbf{v} &= \mathbf{v}^T\left(\mathbf{I}_{d_\ell}\otimes\mathbf{1}_{d_{\backslash\ell}}\right)^T\mathbf{\Omega}\left(\mathbf{I}_{d_\ell}\otimes\mathbf{1}_{d_{\backslash\ell}}\right)\mathbf{v} & \mathbf{v} \neq \mathbf{0} \\
    &= \mathbf{w}^T\mathbf{\Omega}\mathbf{w} & \mathbf{w} \neq \mathbf{0} \\
    &> 0 \tag{positive-definiteness of $\mathbf{\Omega}$}
\end{align*}

In fact, this argument shows that $\mathbf{A}_\ell$ is not merely invertible, but positive definite!

What if $\mathbf{\Omega}$ is merely positive semidefinite?  Note that our $\mathrm{NLL}$ contains a $-\log |\mathbf{\Omega}|$ term, which diverges to infinity as $\mathbf{\Omega}$ approaches semidefiniteness.  Since it is known that optimizers for $\mathbf{\Omega}$ converge (see for example Theorem 6 in the paper that introduced TeraLasso \cite{greenewald_tensor_2019}), this cannot happen: $\mathbf{\Omega}$ must be positive definite.

Some work has been done investigating the case where we force $\mathbf{\Omega}$ to be restricted to the set of rank-$k$ matrices, such as in the preprint by \citeauthor{andrew_making_2024} (\citeyear{andrew_making_2024}).  The log-determinant becomes a log-pseudodeterminant.  In this case, $\mathbf{A}_\ell$ is indeed not invertible; the QP problem has multiple solutions.  However, in such work the nullspace of $\mathbf{\Omega}$ is known\footnote{The work finds an exact solution for the eigenvectors of $\mathbf{\Psi}_\ell$.}: we can add linear constraints to our QP problem to force our domain to be orthogonal to the nullspace, recovering uniqueness-of-coordinate-wise-solution and hence guaranteeing that the coordinate descent of Algorithm \ref{alg:mean-estim} converges.

\subsection{Proof of Global Optimality}

In this section, we will prove Theorem \ref{thm:unimodality} in the main paper.

\begin{repeatedtheorem}
    The maximum likelihood of the noncentral Kronecker-sum-structured normal distribution has a unique maximum, which is global.  The estimator defined by Algorithm \ref{alg:mean-estim} converges to this.
\end{repeatedtheorem}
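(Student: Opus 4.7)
The plan is to formalize the chain of equivalences sketched in the main text. I would first define $\mathfrak{A}$ as the NLL minimization over $(\boldsymbol{\omega}, \mathbf{\Omega})$ with $\boldsymbol{\omega} \in \mathbb{K}_\mathbf{v}$ and $\mathbf{\Omega} \in \mathbb{K}_\mathbf{M}$ positive definite. The augmented problem $\mathfrak{A}'$ has objective
\begin{align*}
\mathrm{NLL}'(\mathcal{D}) \propto \frac{-1}{2}\log|\mathbf{\Omega}| + \frac{1}{2}\log s + \frac{1}{2}(\boldsymbol{x}-\boldsymbol{\omega})^T\mathbf{\Omega}(\boldsymbol{x}-\boldsymbol{\omega}) + \frac{1}{2s},
\end{align*}
minimized over the same constraints together with $s > 0$; and $\mathfrak{B}$ is the standard Gaussian NLL in terms of the $(d_\forall+1)\times(d_\forall+1)$ precision matrix $\mathbf{\Gamma}$ of the augmented datum $[\boldsymbol{x}^T, 1]^T$. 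A Schur-complement calculation gives $\log|\mathbf{\Gamma}| = \log|\mathbf{\Omega}| - \log s$ and shows that the corresponding quadratic form equals $(\boldsymbol{x}-\boldsymbol{\omega})^T\mathbf{\Omega}(\boldsymbol{x}-\boldsymbol{\omega}) + 1/s$, so the objectives of $\mathfrak{A}'$ and $\mathfrak{B}$ agree under the $\mathbf{\Gamma}$-reparametrization up to an additive constant.

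Next I would establish that local minima of $\mathfrak{A}$ are in bijection with local minima of $\mathfrak{A}'$ having $s = 1$. For any fixed $(\boldsymbol{\omega}, \mathbf{\Omega})$ the $s$-dependent terms of $\mathfrak{A}'$ reduce to $\frac{1}{2}\log s + \frac{1}{2s}$, whose derivative $(s-1)/(2s^2)$ has a unique zero at $s=1$ where the second derivative is positive; hence every local minimizer of $\mathfrak{A}'$ has $s^\star = 1$, and on that slice $\mathfrak{A}'$ and $\mathfrak{A}$ differ only by an additive constant. To pass from $\mathfrak{A}'$ to $\mathfrak{B}$ I would verify that $(\boldsymbol{\omega}, \mathbf{\Omega}, s) \mapsto \mathbf{\Gamma}$ is a diffeomorphism on the open set where $\mathbf{\Omega} \succ 0$ and $s > 0$, with inverse obtained by extracting the top-left block, reading off the off-diagonal, and applying the Schur complement; diffeomorphisms preserve local-minimum structure, so local minima transfer bijectively along the chain $\mathfrak{A} \leftrightarrow \mathfrak{A}' \leftrightarrow \mathfrak{B}$.

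The main obstacle is the final convexity step. The objective of $\mathfrak{B}$ is the precision-form Gaussian NLL, which is strictly convex on the PD cone, so the key is to identify the image of the admissible set as a convex subset of that cone. The cleanest route is to reparametrize via $\boldsymbol{\eta} := \mathbf{\Omega}\boldsymbol{\omega}$: the blocks of $\mathbf{\Gamma}$ then depend linearly on $(\mathbf{\Omega}, \boldsymbol{\eta})$ with the quadratic part of $\mathbf{\Gamma}_{22}$ absorbed into the unrestricted variable $1/s$, and $\boldsymbol{\eta} \in \mathbb{K}_{\mathbf{Mv}}$ by the very definition of that space. The admissible region becomes the intersection of the PD cone with a linear subspace, hence convex, and strict convexity of the NLL restricted there delivers a unique global minimizer. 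Pulling back through the chain yields a unique global minimum of $\mathfrak{A}$.

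Convergence of Algorithm \ref{alg:mean-estim} then follows by combining this uniqueness with the classical block coordinate descent results of \citeauthor{tseng_convergence_2001} (\citeyear{tseng_convergence_2001}): the subproblems over $m$, each $\tilmu_\ell$, and $\{\mathbf{\Psi}_\ell\}$ each have a unique minimizer (from the closed-form derivations above and strict convexity of the precision subproblem), so descent is well-defined and must converge to the unique global optimum. For the low-rank GmGM variant, the argument goes through after restricting $\boldsymbol{\omega}$ to be orthogonal to the nullspace of $\mathbf{\Omega}$ and invoking the log-pseudodeterminant differential calculus of \citeauthor{hartwig_singular_1976} (\citeyear{hartwig_singular_1976}) and \citeauthor{holbrook_differentiating_2018} (\citeyear{holbrook_differentiating_2018}); the overall logic is unchanged.
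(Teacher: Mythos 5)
Your proposal is correct and takes essentially the same route as the paper's own proof: the identical chain $\mathfrak{A}\leftrightarrow\mathfrak{A}'\leftrightarrow\mathfrak{B}$ through the bordered precision matrix $\mathbf{\Gamma}$, the same Schur-complement identities giving $\log|\mathbf{\Gamma}|=\log|\mathbf{\Omega}|-\log s$ and the quadratic form plus $1/s$, the same argument that local minima force $s=1$, the same reduction of $\mathfrak{B}$ to a strictly convex program over the intersection of the positive-definite cone with the linear subspace determined by $\mathbb{K}_{\mathbf{M}}$ and $\mathbb{K}_{\mathbf{M}\mathbf{v}}$, and the same Tseng-based convergence argument and Hartwig/Holbrook machinery for the low-rank case. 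No substantive differences to report.
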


The fact that Algorithm \ref{alg:mean-estim} converges follows from the fact that coordinate descent converges to stationary points if the coordinate-wise minima are unique, which we know is true given that our estimators for $\boldsymbol{\mu}_\ell, m, $ and $\mathbf{\Omega}$ are unique (estimating $\mathbf{\Omega}$ is not considered here, but it is a strictly convex optimization problem so whatever drop-in-algorithm we use for it should give a unique answer, potentially differing in the precise parameterization of $\mathbf{\Omega}$ but not $\mathbf{\Omega}$ itself).

To prove that the negative log-likelihood has a unique minimum, we intend to prove the following chain of implications:

\begin{align*}
    &\left(\boldsymbol{\omega}, \mathbf{\Omega}\right) \in \localmin{\mathfrak{A}} \\
    \underset{1}{\iff}& \left(\boldsymbol{\omega}, \mathbf{\Omega}, s=1\right) \in \localmin{\mathfrak{A}'} \\
    \underset{2}{\iff}& \left(\mathbf{\Gamma}\right) \in \localmin{\mathfrak{B}} \\
    \underset{3}{\iff}& \left(\mathbf{\Gamma}\right) \in \globmin{\mathfrak{B}}
\end{align*}

We will first focus on the case where $\mathbf{\Omega}$ is estimated with regularizers, rather than the low-rank subspace restriction.  Letting $\rho$ be some convex regularizer for the precision matrix (typically the graphical lasso, $\rho(\mathbf{\Omega}) = \sum_\ell \rho_\ell\lVert\mathbf{\Psi}_\ell\rVert_{1, od}$), we define our problems as:

\begin{align*}
    \mathfrak{A} &= \left\{\begin{matrix}
        \underset{\boldsymbol{\omega}, \mathbf{\Omega}}{\mathrm{argmin}} &\hspace{-2pt} -\log |\mathbf{\Omega}| + (\mathbf{x} - \boldsymbol{\omega})^T\mathbf{\Omega}(\mathbf{x} - \boldsymbol{\omega}) + \rho\left(\mathbf{\Omega}\right) \\
        \mathrm{where} & \mathbf{\Omega} \in \mathbb{K}_{\mathbf{M}} \succeq \mathbf{0} \\
        & \boldsymbol{\omega} \in \mathbb{K}_{\boldsymbol{v}}
    \end{matrix}\right.
    \\
    \mathfrak{A}' &= \left\{\begin{matrix}
        \underset{\boldsymbol{\omega}, \mathbf{\Omega}, s}{\mathrm{argmin}} &\hspace{-2pt} -\log |\mathbf{\Gamma}| + \begin{bmatrix}
        \boldsymbol{x}^T & 1
    \end{bmatrix}\mathbf{\Gamma}\begin{bmatrix}
        \boldsymbol{x} \\ 1
    \end{bmatrix} + \rho\left(\mathbf{\Omega}\right) \\
        \mathrm{where} & \mathbf{\Omega} \in \mathbb{K}_{\mathbf{M}} \succeq \mathbf{0} \\
        & \boldsymbol{\omega} \in \mathbb{K}_{\boldsymbol{v}} \\
        & s > 0 \\
        & \mathbf{\Gamma} = \begin{bmatrix}
        \mathbf{\Omega} & -\mathbf{\Omega}\boldsymbol{\omega} \\
        -\boldsymbol{\omega}^T\mathbf{\Omega} & \frac{1}{s} + \boldsymbol{\omega}^T\mathbf{\Omega}\boldsymbol{\omega}
    \end{bmatrix}
    \end{matrix} \right.
    \\
    \mathfrak{B} &= \left\{\begin{matrix}
        \underset{\mathbf{\Gamma}}{\mathrm{argmin}} &\hspace{-5pt} -\log |\mathbf{\Gamma}| + \begin{bmatrix}
        \boldsymbol{x}^T &\hspace{-7pt}1
    \end{bmatrix}\mathbf{\Gamma}\begin{bmatrix}
        \boldsymbol{x} \\ 1
    \end{bmatrix}\hspace{-1pt} + \rho\left(\mathbf{\Gamma}_{1:d_\forall,1:d_\forall}\right) \\
        \mathrm{where} & \mathbf{\Gamma}_{1:d_\forall,1:d_\forall} \in \mathbb{K}_{\mathbf{M}} \\
        & \mathbf{\Gamma}_{1:d_\forall,d_\forall+1} \in \mathbb{K}_{\mathbf{M}\boldsymbol{v}} \\
        & \mathbf{\Gamma} \succeq \mathbf{0}
    \end{matrix}\right.
\end{align*}

Recall that $\mathbb{K}_{\mathbf{M}\boldsymbol{v}}$ is a linear subspace, so $\mathfrak{B}$ corresponds to a conic program with linear constraints.  The objective is strictly convex.  Thus, $\mathcal{B}$ has a unique local minimum, which is also the global minimum.  This satisfies implication (3).

Next, note that there is a one-to-one correspondence between $(\boldsymbol{\omega}, \boldsymbol{\Omega}, s)$ and $\mathbf{\Gamma}$.  $\mathbf{\Omega} = \mathbf{\Gamma}_{1:d_\forall, 1:d_\forall}$, and since $\mathbf{\Omega}$ is guaranteed-to-be-invertible we can get $\boldsymbol{\omega}$ from $\mathbf{\Gamma}_{1:d_\forall,d_\forall+1}$.  Finally, $\frac{1}{s}$ can be derived with knowledge of the previous two parameters from $\mathbf{\Gamma}_{d_\forall+1,d_\forall+1}$.  Since the correspondence is one-to-one, the minima of $\mathfrak{A}'$ and $\mathfrak{B}$ are identical - since $\mathfrak{B}$ has a unique minimum, so does $\mathfrak{A}'$.  Thus, to prove implication (2) we need only show that the minimum of $\mathfrak{A}'$ occurs when $s=1$.

\begin{align*}
    \log|\mathbf{\Gamma}| &= \log\left(|\mathbf{\Omega}| \left|\frac{1}{s} + \boldsymbol{\omega}^T\mathbf{\Omega}\boldsymbol{\omega} - \boldsymbol{\omega}^T\mathbf{\Omega}\mathbf{\Omega}^{-1}\mathbf{\Omega}\boldsymbol{\omega}\right|\right) \tag{Block matrix determinant} \\
    &= \log|\mathbf{\Omega}| - \log s \\
    \begin{bmatrix}
        \boldsymbol{x}^T\hspace{-8pt}& 1
    \end{bmatrix}\hspace{-2pt}\mathbf{\Gamma}\hspace{-2pt}\begin{bmatrix}
        \boldsymbol{x} \\ 1
    \end{bmatrix} &= \begin{bmatrix}
        \mathbf{x}^T\mathbf{\Omega} - \boldsymbol{\omega}^T\mathbf{\Omega}\hspace{-3pt}&  \frac{1}{s} + \boldsymbol{\omega}^T\mathbf{\Omega}\boldsymbol{\omega} -\mathbf{x}^T\mathbf{\Omega}\boldsymbol{\omega}
    \end{bmatrix}\hspace{-4pt}\begin{bmatrix}
        \mathbf{x} \\ 1
    \end{bmatrix} \\
    &= \mathbf{x}^T\boldsymbol{\Omega}\mathbf{x} - \boldsymbol{\omega}^T\mathbf{\Omega}\mathbf{x} + \frac{1}{s} + \boldsymbol{\omega}^T\mathbf{\Omega}\boldsymbol{\omega} - \mathbf{x}^T\mathbf{\Omega}\boldsymbol{\omega} \\
    &= \left(\mathbf{x} - \boldsymbol{\omega}\right)^T\boldsymbol{\Omega}\left(\mathbf{x} - \boldsymbol{\omega}\right) + \frac{1}{s}
\end{align*}

Firstly, note that all terms involving $s$ are separable from the rest.  Differentiating with respect to $s$, we see that $0 = \frac{1}{s} - \frac{1}{s^2}$ at the minimum, i.e. that $s = s^2$; since $s > 0$, this leaves $s = 1$ must hold at the minimum; thus, implication (2) holds.  Secondly, note that these equations lead to the same $\mathrm{NLL}$ as $\mathfrak{A}$, except with an additive $\frac{1}{s} + \log s$ term thrown in.  Since we know $s=1$ at the minimum, $\mathfrak{A}'$ reduces to $\mathfrak{A}$; implication (1) holds as well.

This completes the proof; all three implications hold.  The log-likelihood of $\mathfrak{A}'$ and $\mathfrak{B}$ at $s=1$ are the same, and hence all local minima are global minima.  This holds when using regularization, but what about restrictions to a low-rank subspace?

\subsubsection{Global Optimality under Low-Rank Restrictions}

As mentioned in previous sections, most methods of estimating $\mathbf{\Omega}$ use regularization to guarantee existence of a solution.  Some restrict $\mathbf{\Omega}$ to a low-rank subspace, which requires changing determinants to pseudodeterminants.  All the implications still hold in this case, but the block-determinant formula no longer holds and the $\mathfrak{A}' \rightarrow \mathfrak{B}$ mapping is no longer one-to-one, since it is through invertibility of $\mathbf{\Omega}$ that we recover $\boldsymbol{\omega}$.  Under the low rank assumption, we can add the following constraint to $\mathfrak{A}$ and $\mathfrak{A}'$ to recover uniqueness.

\begin{align*}
    \boldsymbol{\omega} \perp \mathrm{nullspace}\left[\mathbf{\Omega}\right]
\end{align*}

The constraints work by enforcing $\boldsymbol{\omega}$ to be orthogonal to the nullspace of $\mathbf{\Omega}$.  Note that this has no effect on $\mathfrak{B}$, since any component of $\boldsymbol{\omega}$ that is not orthogonal to the nullspace gets mapped to zero.  This preserves the one-to-one mapping needed for implication (2), and the convexity needed for implication (3).

What needs to be re-proven for implication (1) to hold is the following:

\begin{enumerate}
    \item Local minima of $\mathfrak{A}'$ occur only at $s=1$
    \item At $s=1$, $\mathfrak{A}$ and $\mathfrak{A}'$ have the same solutions.
\end{enumerate}

Unfortunately, pseudodeterminants are \textit{much} harder to work with than typical determinants.  We rely on two results from the literature, from \citeauthor{holbrook_differentiating_2018} (\citeyear{holbrook_differentiating_2018}) and \citeauthor{hartwig_singular_1976} (\citeyear{hartwig_singular_1976}):

\begin{customthm}{2.20}[\citeauthor{holbrook_differentiating_2018} \citeyear{holbrook_differentiating_2018}]
\begin{align*}
    d\hspace{2pt} \mathrm{det}^\dagger \mathbf{A} &= \mathrm{det}^\dagger \left[\mathbf{A}\right]\mathrm{tr}\left[\mathbf{A}^\dagger d\mathbf{A}\right]
\end{align*}
Where $\mathrm{det}^\dagger$ is the pseudodeterminant and $\mathbf{A}^\dagger$ is the pseudoinverse.
\end{customthm}
\begin{corollary}
    \label{cor:holbrook}
    \begin{align*}
        \frac{d \hspace{2pt} \log \mathrm{det}^\dagger \mathbf{\Gamma}}{ds} &= \mathrm{tr}\left[\mathbf{\Gamma}^\dagger \frac{d\mathbf{\Gamma}}{ds}\right]
    \end{align*}
\end{corollary}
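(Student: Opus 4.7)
The plan is to treat this as a direct specialization of Theorem 2.20 combined with the standard logarithmic derivative identity. Since $\mathbf{\Gamma}$ depends smoothly on $s$ (only the bottom-right entry $\tfrac{1}{s} + \boldsymbol{\omega}^T\mathbf{\Omega}\boldsymbol{\omega}$ involves $s$), the one-parameter derivative $\frac{d\mathbf{\Gamma}}{ds}$ is well defined, and Theorem 2.20 applies with the differential $d\mathbf{\Gamma} = \frac{d\mathbf{\Gamma}}{ds}\,ds$.

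First I would apply the chain rule in the form $\frac{d}{ds}\log f(s) = \frac{1}{f(s)}\frac{df}{ds}$ to $f(s) = \mathrm{det}^\dagger \mathbf{\Gamma}(s)$, which is valid provided $\mathrm{det}^\dagger \mathbf{\Gamma} \neq 0$. This nonvanishing is ensured because $\mathrm{det}^\dagger$ is the product of the nonzero eigenvalues of a positive semidefinite matrix, and $\mathbf{\Gamma}$ cannot be the zero matrix at any feasible point of $\mathfrak{B}$ (indeed, $\mathbf{\Omega}$ has at least one positive eigenvalue whenever the problem is well posed). Next I would substitute Theorem 2.20 into the numerator, yielding
\begin{align*}
\frac{d\,\log \mathrm{det}^\dagger \mathbf{\Gamma}}{ds} &= \frac{1}{\mathrm{det}^\dagger \mathbf{\Gamma}}\cdot \mathrm{det}^\dagger[\mathbf{\Gamma}]\,\mathrm{tr}\!\left[\mathbf{\Gamma}^\dagger \tfrac{d\mathbf{\Gamma}}{ds}\right].
\end{align*}
The pseudodeterminant factors cancel, leaving exactly the claimed identity.

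I do not expect any substantial obstacle here; the corollary is essentially the scalar-parameter form of the matrix differential identity in Theorem 2.20, obtained by dividing the differential equation $d\,\mathrm{det}^\dagger \mathbf{\Gamma} = \mathrm{det}^\dagger[\mathbf{\Gamma}]\,\mathrm{tr}[\mathbf{\Gamma}^\dagger d\mathbf{\Gamma}]$ through by $\mathrm{det}^\dagger \mathbf{\Gamma}$ and interpreting the differential with respect to the scalar variable $s$. The only care needed is to justify nonvanishing of the pseudodeterminant and smoothness of $\mathbf{\Gamma}$ in $s$, both of which follow immediately from the block structure of $\mathbf{\Gamma}$ and the positive-semidefiniteness constraints already imposed in $\mathfrak{A}'$ and $\mathfrak{B}$.
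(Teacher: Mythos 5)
Your proposal is correct and matches the paper's treatment: the corollary is obtained exactly as you describe, by combining Theorem 2.20 with the logarithmic derivative $\frac{d}{ds}\log f = \frac{1}{f}\frac{df}{ds}$ and cancelling the $\mathrm{det}^\dagger[\mathbf{\Gamma}]$ factors (the paper leaves this implicit as an immediate consequence). Your added remarks on nonvanishing of the pseudodeterminant and smoothness in $s$ are sound and, if anything, slightly more careful than the paper itself.
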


The second theorem, by \citeauthor{hartwig_singular_1976}, gives an expression for the pseudoinverse of a `bordered matrix', i.e. a matrix of the form $\begin{bmatrix}
        \mathbf{A} & \mathbf{c} \\
        \mathbf{b} & d
\end{bmatrix}$.  However, they split the problem into several sub-cases, which are not immediately apparent from the form of $\mathbf{A}, \mathbf{b}, \mathbf{c},$ and $d$.  Thus, we will first manipulate our matrix into one of these sub-cases before stating the theorem.

Recall that $\mathbf{\Omega}$ is positive semi-definite, and thus it has an eigendecomposition $\begin{bmatrix}
    \mathbf{V}_1 & \mathbf{V}_2
\end{bmatrix}\begin{bmatrix}
    \mathbf{\Lambda} & \mathbf{0} \\ \mathbf{0} & \mathbf{0}
\end{bmatrix}\begin{bmatrix}
    \mathbf{V}_1^T \\ \mathbf{V}_2^T
\end{bmatrix}$.  Letting $\mathbf{V} = \begin{bmatrix}
    \mathbf{V}_1 & \mathbf{V}_2
\end{bmatrix}$, observe the following:

\begin{align*}
    \begin{bmatrix}
        \mathbf{V}^T & \mathbf{0} \\
        \mathbf{0} & 1
    \end{bmatrix}&\begin{bmatrix}
        \mathbf{\Omega} & -\mathbf{\Omega}\boldsymbol{\omega} \\
        -\boldsymbol{\omega}^T\mathbf{\Omega} & \frac{1}{s} + \boldsymbol{\omega}^T\mathbf{\Omega}\boldsymbol{\omega}
    \end{bmatrix}\begin{bmatrix}
        \mathbf{V} & \mathbf{0} \\
        \mathbf{0} & 1
    \end{bmatrix} \\
    &= \begin{bmatrix}
        \mathbf{V}^T\mathbf{\Omega}\mathbf{V} & -\mathbf{V}^T\mathbf{\Omega}\boldsymbol{\omega} \\
        \boldsymbol{\omega}^T\mathbf{\Omega}\mathbf{V} & \frac{1}{s} + \boldsymbol{\omega}^T\mathbf{\Omega}\boldsymbol{\omega}
    \end{bmatrix} \\
    &= \begin{bmatrix}
        \mathbf{\Lambda} & \mathbf{0} & -\mathbf{\Lambda}\mathbf{V}^T_1\boldsymbol{\omega} \\
        \mathbf{0} & \mathbf{0} & \mathbf{0} \\
        -\boldsymbol{\omega}^T\mathbf{V}_1\mathbf{\Lambda} & \mathbf{0} & \frac{1}{s} + \boldsymbol{\omega}^T\mathbf{\Omega}\boldsymbol{\omega}
    \end{bmatrix}
\end{align*}

We will now find the pseudoinverse of this new matrix, as $\mathbf{U}\mathbf{M}^\dagger\mathbf{U}^T = \left(\mathbf{U}\mathbf{M}\mathbf{U}^T\right)^\dagger$ for unitary matrices $\mathbf{U}$.  In fact, the bottom-right entry is entirely unchanged by this transformation, and it will turn out to be all we need for our proof.

\begin{customthm}{\hspace{1pt}}[\citeauthor{hartwig_singular_1976} (\citeyear{hartwig_singular_1976})]
    Suppose we have a matrix of the form $\begin{bmatrix}
        \mathbf{\Sigma} & \mathbf{0} & \mathbf{q} \\
        \mathbf{0} & \mathbf{0} & \mathbf{0} \\
        \mathbf{p}^T & \mathbf{0} & d
    \end{bmatrix}$ for diagonal $\mathbf{\Sigma}$ of full rank, and $z = d - \mathbf{p}^T\mathbf{\Sigma}^{-1}\mathbf{q} \neq 0$.  Then, the lower right hand entry of the pseudoinverse is $\frac{1}{z}$.  Furthermore, the matrix has rank $1+\mathrm{rank}\mathbf{\Sigma}$.
\end{customthm}

This is, of course, a special case of their full result giving a formula for the entire pseudoinverse - but it is all we need.

\begin{corollary}
    \label{cor:hartwig}
    The lower right hand entry of $\mathbf{\Gamma}^\dagger$ is $s$.
\end{corollary}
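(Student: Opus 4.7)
The plan is to apply Hartwig's theorem directly to the transformed matrix that already appears in the excerpt, then pull the result back through the unitary conjugation. Specifically, I would match the transformed $\mathbf{\Gamma}$
\[
\begin{bmatrix}
    \mathbf{\Lambda} & \mathbf{0} & -\mathbf{\Lambda}\mathbf{V}_1^T\boldsymbol{\omega} \\
    \mathbf{0} & \mathbf{0} & \mathbf{0} \\
    -\boldsymbol{\omega}^T\mathbf{V}_1\mathbf{\Lambda} & \mathbf{0} & \frac{1}{s} + \boldsymbol{\omega}^T\mathbf{\Omega}\boldsymbol{\omega}
\end{bmatrix}
\]
against the hypothesis of Hartwig's theorem by setting $\mathbf{\Sigma} = \mathbf{\Lambda}$, $\mathbf{q} = -\mathbf{\Lambda}\mathbf{V}_1^T\boldsymbol{\omega}$, $\mathbf{p}^T = -\boldsymbol{\omega}^T\mathbf{V}_1\mathbf{\Lambda}$, and $d = \frac{1}{s} + \boldsymbol{\omega}^T\mathbf{\Omega}\boldsymbol{\omega}$. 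Since $\mathbf{\Lambda}$ collects the nonzero eigenvalues of $\mathbf{\Omega}$, it is diagonal of full rank, so the hypothesis is met.

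The key calculation is the scalar $z = d - \mathbf{p}^T\mathbf{\Sigma}^{-1}\mathbf{q}$. Plugging in gives
\[
\mathbf{p}^T\mathbf{\Sigma}^{-1}\mathbf{q} = \boldsymbol{\omega}^T\mathbf{V}_1\mathbf{\Lambda}\mathbf{V}_1^T\boldsymbol{\omega} = \boldsymbol{\omega}^T\mathbf{\Omega}\boldsymbol{\omega},
\]
where the second equality uses the eigendecomposition of $\mathbf{\Omega}$ (the part along $\mathbf{V}_2$ contributes zero to the quadratic form regardless of $\boldsymbol{\omega}$). The two $\boldsymbol{\omega}^T\mathbf{\Omega}\boldsymbol{\omega}$ terms therefore cancel, leaving $z = \frac{1}{s}$. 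Since $s > 0$, the non-degeneracy condition $z \neq 0$ required by Hartwig's theorem holds, and the theorem yields that the lower-right entry of the pseudoinverse of the transformed matrix equals $\frac{1}{z} = s$.

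To finish, I would invoke $(\mathbf{U}\mathbf{M}\mathbf{U}^T)^\dagger = \mathbf{U}\mathbf{M}^\dagger\mathbf{U}^T$ for unitary $\mathbf{U}$, applied to $\mathbf{U} = \mathrm{diag}(\mathbf{V}^T, 1)$. Because this $\mathbf{U}$ has a $1$ in its bottom-right block, conjugation leaves the $(d_\forall+1, d_\forall+1)$ entry of the pseudoinverse unchanged. Hence the lower-right entry of $\mathbf{\Gamma}^\dagger$ is itself $s$, as claimed.

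There is no real obstacle; the proof is essentially bookkeeping. The only points that require a moment of care are verifying that $\mathbf{\Lambda}$ really is invertible (true by construction, since it lists only the nonzero eigenvalues) and that the cancellation giving $z = 1/s$ does not need the extra constraint $\boldsymbol{\omega}\perp\mathrm{nullspace}[\mathbf{\Omega}]$ — it does not, because the nullspace components automatically drop out of $\boldsymbol{\omega}^T\mathbf{\Omega}\boldsymbol{\omega}$.
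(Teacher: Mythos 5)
Your proposal is correct and follows essentially the same route as the paper: apply Hartwig's theorem to the unitarily conjugated matrix with $\mathbf{\Sigma}=\mathbf{\Lambda}$, compute $z=\tfrac{1}{s}+\boldsymbol{\omega}^T\mathbf{\Omega}\boldsymbol{\omega}-\boldsymbol{\omega}^T\mathbf{V}_1\mathbf{\Lambda}\mathbf{\Lambda}^{-1}\mathbf{\Lambda}\mathbf{V}_1^T\boldsymbol{\omega}=\tfrac{1}{s}$, and note that the conjugation by $\mathrm{diag}(\mathbf{V}^T,1)$ leaves the bottom-right entry unchanged. Your added observation that the cancellation holds without the constraint $\boldsymbol{\omega}\perp\mathrm{nullspace}[\mathbf{\Omega}]$ is a correct and slightly more explicit remark than the paper makes.
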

\begin{proof}
    We already saw that our matrix can be turned into the form required by Hartwig's theorem without changing what the lower right hand entry will be.  It then suffices to observe that $z = \frac{1}{s} + \boldsymbol{\omega}^T\mathbf{\Omega}\boldsymbol{\omega} - \boldsymbol{\omega}^T\mathbf{V}_1\mathbf{\Lambda}\mathbf{\Lambda}^{-1}\mathbf{\Lambda}\mathbf{V}_1^T\boldsymbol{\omega} = \frac{1}{s}$.
\end{proof}

From Corollary \ref{cor:holbrook}, the derivative of the log pseudodeterminant is $\mathrm{tr}\left[\mathbf{\Gamma}^\dagger\begin{bmatrix}
    \mathbf{0} & \mathbf{0} \\
    \mathbf{0} & -\frac{1}{s^2}
\end{bmatrix}\right] = \mathrm{tr}\begin{bmatrix}
    \mathbf{0} & \frac{-1}{s^2}\mathbf{\Gamma}^\dagger_{1:d_\forall, d_\forall + 1} \\
    \mathbf{0} & \frac{-1}{s^2}\mathbf{\Gamma}^\dagger_{d_\forall+1,d_\forall+1}
\end{bmatrix}$.  Thus, it only depends on the bottom right entry of the pseudoinverse; in particular, it is $\frac{1}{s}$.  From this, the arguments in the previous section hold, and we can show that any minima occur only at $s = 1$.

All that remains is to show that at $s=1$, $\mathfrak{A}$ and $\mathfrak{A}'$ have the same solutions.  We already know from earlier that the pseudodeterminant of $\mathbf{\Gamma}$ is the same as the pseudodeterminant of $\begin{bmatrix}
        \mathbf{\Lambda} & \mathbf{0} & -\mathbf{\Lambda}\mathbf{V}^T_1\boldsymbol{\omega} \\
        \mathbf{0} & \mathbf{0} & \mathbf{0} \\
        -\boldsymbol{\omega}^T\mathbf{V}_1\mathbf{\Lambda} & \mathbf{0} & 1 + \boldsymbol{\omega}^T\mathbf{\Omega}\boldsymbol{\omega}
\end{bmatrix}$.  Removing out the zero rows and columns, we are left with $\begin{bmatrix}
        \mathbf{\Lambda} & -\mathbf{\Lambda}\mathbf{V}^T_1\boldsymbol{\omega} \\
        -\boldsymbol{\omega}^T\mathbf{V}_1\mathbf{\Lambda} & 1 + \boldsymbol{\omega}^T\mathbf{\Omega}\boldsymbol{\omega}
\end{bmatrix}$, which by Hartwig is a full rank matrix (recall the theorem stated its rank as $1 + \mathrm{rank}\mathbf{\Lambda}$ and that $\mathbf{\Lambda}$ corresponds to the non-zero eigenvalues).  Hence, we can use the ordinary block determinant rule.

\begin{align*}
    \mathrm{det}^\dagger \mathbf{\Gamma} &= \mathrm{det} \begin{bmatrix}
        \mathbf{\Lambda} & -\mathbf{\Lambda}\mathbf{V}^T_1\boldsymbol{\omega} \\
        -\boldsymbol{\omega}^T\mathbf{V}_1\mathbf{\Lambda} & 1 + \boldsymbol{\omega}^T\mathbf{\Omega}\boldsymbol{\omega}
    \end{bmatrix} \\
    &= \mathrm{det}\left[\mathbf{\Lambda}\right]\mathrm{det}\left[1 + \boldsymbol{\omega}^T\mathbf{\Omega}\boldsymbol{\omega} - \boldsymbol{\omega}^T\mathbf{V}_1\mathbf{\Lambda}\mathbf{V}_1^T\boldsymbol{\omega}\right]
\end{align*}

Recall that we required $\boldsymbol{\omega}$ to be orthogonal to the nullspace of $\mathbf{\Omega}$, resulting in:

\begin{align*}
    \mathrm{det}^\dagger \mathbf{\Gamma} &= \mathrm{det}\left[\mathbf{\Lambda}\right]\mathrm{det}\left[1 + \boldsymbol{\omega}^T\mathbf{V}_1\mathbf{\Lambda}\mathbf{V}_1^T\boldsymbol{\omega} - \boldsymbol{\omega}^T\mathbf{V}_1\mathbf{\Lambda}\mathbf{V}_1^T\boldsymbol{\omega}\right] \\
    &= \mathrm{det}\mathbf{\Lambda} \\
    &= \mathrm{det}^\dagger\mathbf{\Omega}
\end{align*}

Thus, we have that:

\begin{align*}
    \mathfrak{A}' &= \left\{\begin{matrix}
        \underset{\boldsymbol{\omega}, \mathbf{\Omega}, s}{\mathrm{argmin}} &\hspace{-2pt} -\log \det^\dagger\mathbf{\Gamma} + \begin{bmatrix}
        \boldsymbol{x}^T & 1
    \end{bmatrix}\mathbf{\Gamma}\begin{bmatrix}
        \boldsymbol{x} \\ 1
    \end{bmatrix} + \rho\left(\mathbf{\Omega}\right) \\
        \mathrm{where} & \mathbf{\Omega} \in \mathbb{K}_{\mathbf{M}} \succeq \mathbf{0} \\
        & \boldsymbol{\omega} \in \mathbb{K}_{\boldsymbol{v}}\\
        & \boldsymbol{\omega} \hspace{3pt}\bot\hspace{3pt} \mathrm{nullspace}\hspace{3pt}\mathbf{\Omega} \\
        & s > 0
\end{matrix}\right.
\end{align*}

is equivalent to:

\begin{align*}
    \mathfrak{A} &= \left\{\begin{matrix}
        \underset{\boldsymbol{\omega}, \mathbf{\Omega}}{\mathrm{argmin}} &\hspace{-8pt} -\log \det^\dagger\mathbf{\Omega} + \left(\mathbf{x}\hspace{-2pt}-\hspace{-2pt}\boldsymbol{\omega}\right)^T\mathbf{\Omega}\left(\mathbf{x}\hspace{-2pt}-\hspace{-2pt}\boldsymbol{\omega}\right) + \rho\left(\mathbf{\Omega}\right) \\
        \mathrm{where} & \mathbf{\Omega} \in \mathbb{K}_{\mathbf{M}} \succeq \mathbf{0} \\
        & \boldsymbol{\omega} \in \mathbb{K}_{\boldsymbol{v}}\\
        & \boldsymbol{\omega} \hspace{3pt}\bot\hspace{3pt} \mathrm{nullspace}\hspace{3pt}\mathbf{\Omega}
\end{matrix}\right.
\end{align*}

This completes the proof.

\section{Erdos-Renyi Synthetic Data}

We mentioned in the results section that we repeated our experiments with Erdos-Renyi graphs (p=0.05) instead of Barabasi-Albert graphs.  We found identical results, except that zero-mean algorithms perform much worse under Poisson noise.    See Figure \ref{fig:er} for the results, which use the same mean distributions as from the main paper.

\begin{figure}[h!]
    \centering
    \begin{subfigure}[t]{0.45\linewidth}
        \centering
        \includegraphics[width=\linewidth]{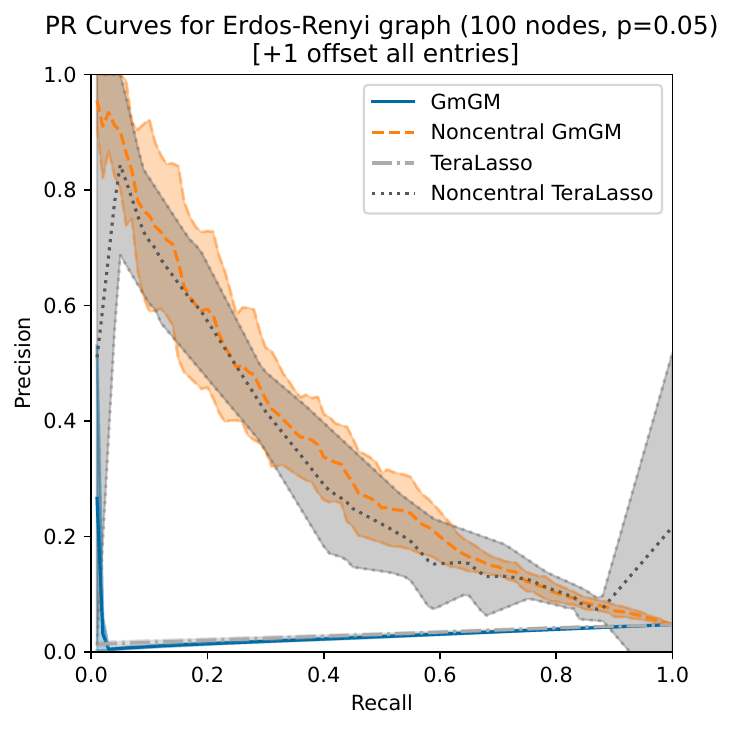}
    \end{subfigure}
    \begin{subfigure}[t]{0.45\linewidth}
        \centering
        \includegraphics[width=\linewidth]{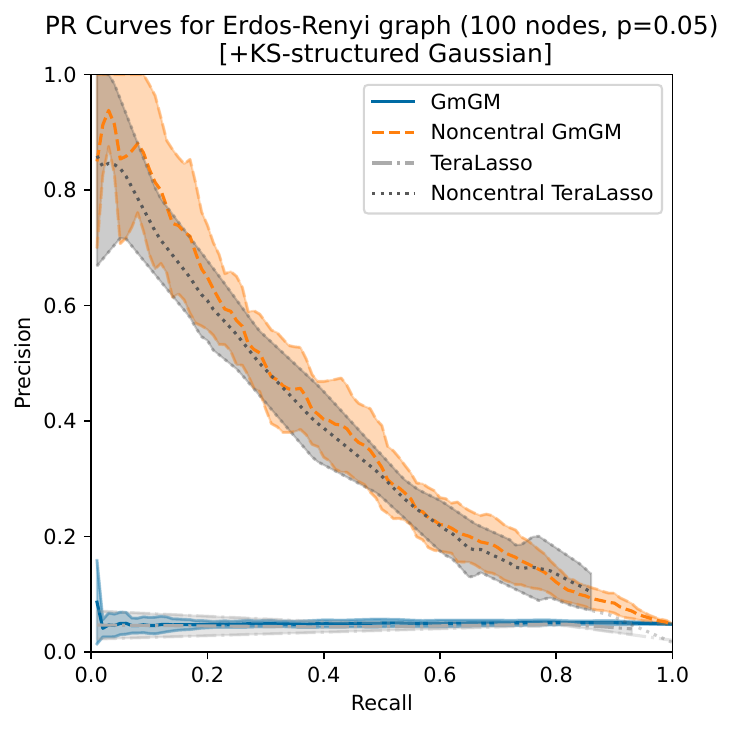}
    \end{subfigure}
    \begin{subfigure}[t]{0.45\linewidth}
        \centering
        \includegraphics[width=\linewidth]{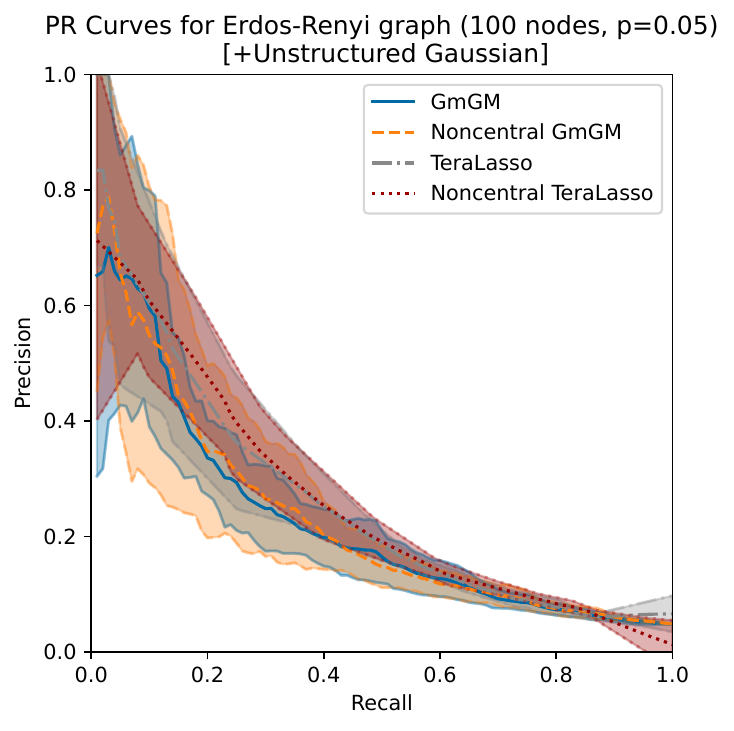}
    \end{subfigure}
    \begin{subfigure}[t]{0.45\linewidth}
        \centering
        \includegraphics[width=\linewidth]{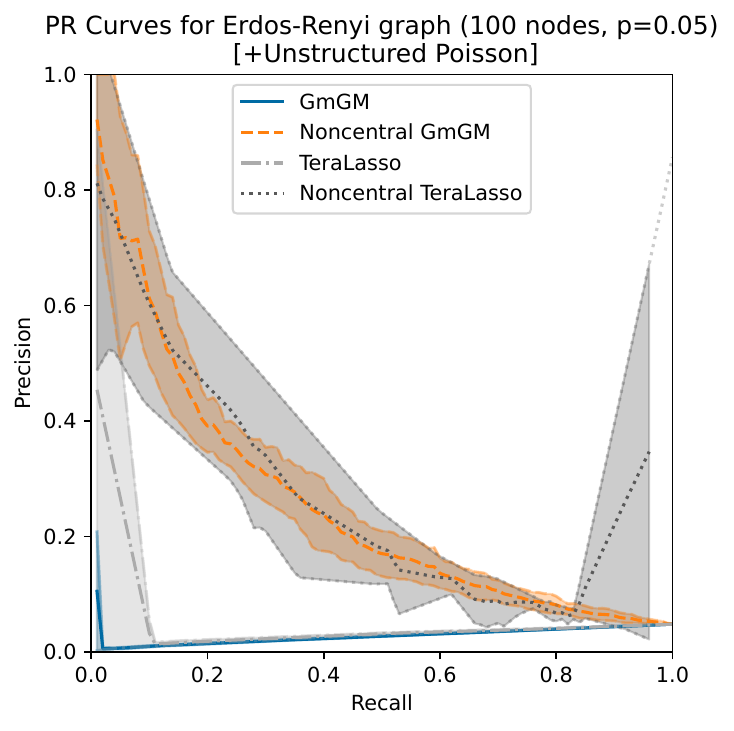}
    \end{subfigure}
    \caption{Precision and recall for on a synthetic dataset.  Error bars are the best/worst performance over 10 trials; the center line is average performance.}
    \label{fig:er}
\end{figure}

\end{document}